\title{Binary sequences with low correlation via cyclotomic function fields}
\author{Lingfei Jin, Liming Ma and Chaoping Xing}
\newtheorem{lemma}{Lemma}[section]
\newtheorem{theorem}[lemma]{Theorem}
\newtheorem{prop}[lemma]{Proposition}
\theoremstyle{remark}
\renewcommand{\epsilon}{\varepsilon}
\renewcommand{\le}{\leqslant}
\renewcommand{\ge}{\geqslant}
\def\PGL{{\rm PGL}}
\def\Gal{{\rm Gal}}
\def\bal{{\rm bal}}
\newcommand{\vnote}[1]{}
\def\PP{\mathbb{P}}
\def\F{\mathbb{F}}
\def\Z{\mathbb{Z}}
\def \mL {\mathcal{L}}
\def \mS {\mathcal{S}}
\def \Xi {{X^{[i]}}}
\newcommand{\Ga}{\alpha}
\newcommand{\Gd}{\delta}
\newcommand{\Gl}{\lambda}    
\newcommand{\Gs}{\sigma}
\newcommand{\s}{\sigma}
\def\PGL{{\rm PGL}}
\def \bs {{\bf s}}
\def \bu {{\bf u}}
\def \bv {{\bf v}}
\def\mS{{\mathcal S}}
\def\Aut {{\rm Aut }}
\def\Gal{{\rm Gal}}
\begin{document}
\maketitle

\begin{abstract}
Due to wide applications of binary sequences with low correlation to communications, various constructions of such sequences have been proposed in literature.
However, most of the known constructions via finite fields make use of the multiplicative cyclic group of $\F_{2^n}$. It is often overlooked in this community that all $2^n+1$ rational places (including ``place at infinity") of the rational function field over $\F_{2^n}$ form a cyclic structure under an automorphism of order $2^n+1$. In this paper, we make use of this cyclic structure to provide an explicit construction of families of binary sequences of length $2^n+1$ via the finite field $\F_{2^n}$. Each family of sequences has size $2^n-1$ and its correlation is upper bounded by $\lfloor 2^{(n+2)/2}\rfloor$. Our sequences can be constructed explicitly and have competitive parameters. In particular, compared with the Gold sequences of length $2^n-1$ for even $n$, we have larger length and smaller correlation although the family size of our sequences is slightly smaller.
\end{abstract}

\section{Introduction}
Sequences with good correlation properties have been widely used in many applications, such as code-division multiple access (CDMA), spread spectrum systems and broadband satellite communications. For such applications, it is desirable to construct sequences with low autocorrelation, low cross-correlation, and large family size.

In literature, there are various constructions of binary sequences with good parameters. The best known class of such sequences are the so called m-sequences, which can be obtained from linear feedback shift registers \cite{m}, with length of the form $2^n-1$ (for a positive integer $n$).
Since m-sequences have ideal autocorrelation properties,  many families of low correlation sequences have been constructed by using m-sequences and their decimations.
For example, the well-known Gold sequences are a family of binary sequences having three-level out-of-phase auto- and cross-correlation (nontrivial correlation) values  which can be obtained by EXOR-ing two m-sequences of the same length with each other \cite{Span}.
Kasami sequences can be constructed using m-sequences and their decimations \cite{Kas}.
Besides, there are some known families of sequences of length  $2^n-1$ with good correlation properties, such as  bent function sequences \cite{Bent}, No sequences \cite{No}, TN sequences \cite{Klap}.
In 2011, Zhou and Tang generalized the modified Gold sequences and obtained binary sequences with length $2^n-1$, $n=2m+1$, size $2^{\ell n}+\cdots+2^n-1$, correlation $2^{m+\ell}+1$ \cite{ZT}.
In addition to the above mentioned sequences of length  $2^n-1$, many efforts have been devoted to the constructions of binary sequences with length of other forms.
In \cite{Uda}, Udaya and Siddiqi obtained a family of $2^{n-1}$ binary sequences, length $2(2^n-1)$  satisfying the Welch bound for odd $n$.
In \cite{Tang3}, this family was extended to a larger one with  $2^n$ sequences and the same correlation. In \cite{Tang2}, the authors presented two optimal families of sequences of length $2^n-1$ and $2(2^n-1)$ for odd integer $n$.
In 2002, Gong presented a construction of binary sequences with size $2^n-1$, length $(2^n-1)^2$, and correlation $3+2(2^n-1)$ \cite{gong}.
All of these constructions are based on finite fields of even characteristic.


There are few constructions of binary sequences which are based on finite fields of odd characteristics \cite{Lempel,Legendre,Pat,Rushanan}.
In \cite{Pat}, Paterson proposed a new family of pseudorandom binary
sequences based on Hadamard difference sets and MDS codes for length $p^2$ where $p\equiv3 (\text{mod } 4)$ is a prime.
In 2006, a family of binary sequences, called Weil sequences, which have prime length were designed  \cite{Rushanan}. The idea  is to derive sequences from single quadratic residue based on Legendre sequences using a shift-and-add construction. This sequence family has sequence length $p$ ($p$ is an odd prime), family size $(p-1)/2$, and correlation bounded by $5+2\sqrt{p}$.
In 2010, Su et.al. introduced a new sequence of period $p(q-1)$ by combining the $p$-periodic Legendre sequence and the $(q-1)$-periodic Sidelnikov sequence, called the Lengdre-Sidelnikov sequence \cite{Su}.
Recently, Jin et. al. considered the constructions of binary sequences with low correlation via multiplicative quadratic character over finite fields of odd characteristics \cite{Jin21}. They can produce families of good sequences with length $q-1$ for any prime power $q$ by using cyclic multiplication group $\F_q^*$ and sequences of  length $q$ for any odd prime $q$  from cyclic additive group $\F_q$.

Though  a lot of work have been done on the constructions of binary sequences with good parameters, there are still limited choices for the sequence length in the known results. Thus, it is of great importance to design good sequences with other possible lengths
for different scenarios, since adding or deleting sequence values, usually destroys the  good correlation
properties in these families.

\subsection{Our main result and techniques}

In this paper, we provide an explicit construction of families of binary sequences of length $2^n+1$ via the finite field $\F_q$ with $q=2^n$ elements. The correlation of this family is upper bounded by $\lfloor2^{(n+2)/2}\rfloor$. To our knowledge, this is the first construction of binary sequence of length $2^n+1$ with low correlation. In Table I, we  list some well-known families of binary sequences  with low correlation for comparison. Our main result is listed in the last row of Table I.

Most of the constructions of binary sequences with low correlation in literature explore cyclic structure of a finite field. For instance, for $q=2^n$, one can use the multiplicative cyclic group $\F_{2^n}^*$. This structure  produces sequences of length $2^n-1$. All binary sequences in Table I make use of this multiplicative cyclic structure or modify  the constructions based on this multiplicative cyclic structure. One of the fact that all points of the projective line over $\F_q$ (together with ``the point at infinity") form a cyclic group is often overlooked. In this paper, we make use of this fact and provide an explicit construction of binary sequences of length $q+1$, size $q-1$ with low correlation via finite fields $\F_q$ where $q=2^n$ and $n$ is a positive integer. It turns out that our sequences have competitive parameters. For instance, compared with the Gold sequences of length $2^n-1$ for even $n$, we have larger length and smaller correlation although the family size of our sequences is slightly smaller.

From a high level point of view, we may view the zero $P_\Ga$ of $x-\Ga$ as a point of the projective line over $\F_q$ for any $\Ga\in\F_q$.
In order to obtain sequences with length $q+1$, we can include the infinity place of the rational function field $\F_q(x)$ as \cite{JMX20,JMX21}.
It is well known that the automorphism group of the rational function field $\F_q(x)$ over $\F_q$ is isomorphic to the projective general linear group $\PGL_2(\F_q)$ and  there is an automorphism $\Gs$ of order $q+1$ such that $\{\Gs^j(P_0)\}_{j=0}^q$ form all $q+1$ rational places of $\F_q(x)$ \cite{HKT08,JMX20}.
Our sequences are constructed via the absolute trace of evaluations of rational functions on such cyclically ordered rational places $\{\Gs^j(P_0)\}_{j=0}^q$ from $\F_q$ to $\F_2$.
To obtain a lower correlation, we need to choose rational functions $z_1,z_2,\cdots,z_N$ for some positive integer $N$ carefully.
The correlation is converted to estimating the bound of the number of rational places of Artin-Schreier curves $y^2+y=z_i+\s^{-t}(z_j)$ for some $0\le i, j\le N-1$ and $0\le t\le q$.
Firstly,  $z_i+\Gs^{-t}(z_i)$ can not be constants in $\F_q$ for $1\le t\le q$ and $1\le i\le N$. To overcome this problem, we introduce an equivalence relation and choose at most one element in each equivalence class. Secondly, in order to obtain an Artin-Schreier curve with small genus, we need to choose function $z_i$ such that $z_i$ and $\Gs^{-t}(z_j)$ have the same pole. In particular, $z_i$ can be chosen from a Riemann-Roch space associated to a place which is invariant under the automorphism $\Gs$.
Hence, we can employ the cyclotomic function fields over $\F_q(x)$ with modulus $p(x)$ which is a primitive quadratic irreducible polynomial to construct such explicit binary sequences with low correlation.

\begin{table}[]
	\setlength{\abovecaptionskip}{0pt}%
	\setlength{\belowcaptionskip}{10pt}%
	\caption{PARAMETERS OF SEQUENCE FAMILIES}
	\centering

	\begin{tabular}{@{}cccc@{}}
		\toprule
		Sequence                             & Sequence Length N                                & Family Size                            & Max Correlation                       \\ \midrule
		\multicolumn{1}{|c|}{Gold(odd)}      & \multicolumn{1}{c|}{$2^n-1$, $n$ odd}                   & \multicolumn{1}{c|}{$N+2$}             & \multicolumn{1}{c|}{$1+\sqrt{2}\sqrt{N+1}$} \\ \midrule
		\multicolumn{1}{|c|}{Gold(even)}     & \multicolumn{1}{c|}{$2^n-1,n=4k+2$}            & \multicolumn{1}{c|}{$N+2$}             & \multicolumn{1}{c|}{$1+2\sqrt{N+1}$}        \\ \midrule
		\multicolumn{1}{|c|}{Kasami(small)}  & \multicolumn{1}{c|}{$2^n-1, n \text{ even}$}            & \multicolumn{1}{c|}{$\sqrt{N+1}$}      & \multicolumn{1}{c|}{$1+\sqrt{N+1}$}         \\ \midrule
		\multicolumn{1}{|c|}{Kasami(large)}  & \multicolumn{1}{c|}{$2^n-1,n=4k+2$}            & \multicolumn{1}{c|}{$(N+2)\sqrt{N+1}$} & \multicolumn{1}{c|}{$1+2\sqrt{N+1}$}        \\ \midrule
		\multicolumn{1}{|c|}{Bent}           & \multicolumn{1}{c|}{$2^n-1, n=4k$}                & \multicolumn{1}{c|}{$\sqrt{N+1}$}         & \multicolumn{1}{c|}{$1+\sqrt{N+1}$}          \\ \midrule
		\multicolumn{1}{|c|}{No}           & \multicolumn{1}{c|}{$2^n-1, n \text{ even}$}                & \multicolumn{1}{c|}{$\sqrt{N+1}$}         & \multicolumn{1}{c|}{$1+\sqrt{N+1}$}          \\ \midrule
		\multicolumn{1}{|c|}{TN}           & \multicolumn{1}{c|}{$2^n-1, n=2km$}                & \multicolumn{1}{c|}{$\sqrt{N+1}$}         & \multicolumn{1}{c|}{$1+\sqrt{N+1}$}          \\ \midrule
		\multicolumn{1}{|c|}{Tang et al.}  & \multicolumn{1}{c|}{$2(2^n-1),n=em$ odd }            & \multicolumn{1}{c|}{$\frac{N}{2}+1$} & \multicolumn{1}{c|}{$2+\sqrt{N+2}$}        \\ \midrule
		\multicolumn{1}{|c|}{Gong}  & \multicolumn{1}{c|}{$(2^n-1)^2$, $2^n-1$ prime}            & \multicolumn{1}{c|}{$\sqrt{N}$ }& \multicolumn{1}{c|}{$3+2\sqrt{N}$}        \\ \midrule

		\multicolumn{1}{|c|}{Paterson}       & \multicolumn{1}{c|}{$p^2$, p prime 3 mod 4}    & \multicolumn{1}{c|}{$N$}               & \multicolumn{1}{c|}{$5+4\sqrt{N}$}          \\ \midrule
		\multicolumn{1}{|c|}{Paterson,Gong}  & \multicolumn{1}{c|}{$p^2$, p prime 3 mod 4}    & \multicolumn{1}{c|}{$\sqrt{N}+1$}      & \multicolumn{1}{c|}{$3+2\sqrt{N}$}          \\ \midrule
		\multicolumn{1}{|c|}{Weil}           & \multicolumn{1}{c|}{$p$, odd prime}                & \multicolumn{1}{c|}{$(N-1)/2$}         & \multicolumn{1}{c|}{$5+2\sqrt{N}$}          \\ \midrule

		\multicolumn{1}{|c|}{Jin et al.} & \multicolumn{1}{c|}{$q-1$,$q\ge17$ odd prime power} & \multicolumn{1}{c|}{$N+3$}               & \multicolumn{1}{c|}{$6+2\sqrt{N+1}$}          \\ \midrule
		\multicolumn{1}{|c|}{Jin et al.} & \multicolumn{1}{c|}{$q-1$,$q\ge11$ odd prime power} & \multicolumn{1}{c|}{$N/2$}         & \multicolumn{1}{c|}{$2+2\sqrt{N+1}$}          \\ \midrule
		\multicolumn{1}{|c|}{Jin et al.} & \multicolumn{1}{c|}{$q$,$q\ge 17$ odd prime} & \multicolumn{1}{c|}{$N$}               & \multicolumn{1}{c|}{$5+2\sqrt{N}$}          \\ \midrule
		\multicolumn{1}{|c|}{Jin et al.} & \multicolumn{1}{c|}{$q$, $q \ge 11$ odd prime} & \multicolumn{1}{c|}{$(N-1)/2$}         & \multicolumn{1}{c|}{$1+2\sqrt{N}$}          \\ \midrule
		\multicolumn{1}{|c|}{{\bf Our construction}} & \multicolumn{1}{c|}{\bf {$2^n+1$}} & \multicolumn{1}{c|}{{\bf $N-2$}}         & \multicolumn{1}{c|}{\bf {$2\sqrt{N-1}$}}          \\
\bottomrule
	\end{tabular}

\end{table}

\subsection{Organization of this paper}
In Section \ref{sec:2}, we briefly introduce some backgrounds on sequences, global function fields, rational function fields, cyclotomic function fields, Kummer extensions and Artin-Schreier curves. In Section \ref{sec:3}, we provide a general construction of families of binary sequences with low correlation via cyclotomic function fields over finite fields. In Section \ref{sec:4}, an explicit construction of binary sequences is given. Besides, an algorithm to generate such a family of sequences is given correspondingly as well as some numerical results.

\section{Preliminaries}\label{sec:2}
In this section, we present some preliminaries on binary sequences and their correlation, global function fields, cyclotomic function fields, Kummer extensions and Artin-Schreier curves.

\subsection{Binary sequences and their correlation}
Let $\mS$ be a set of binary sequences, each has length $N$. For every sequence $\bs=(s_0,s_1,\dots,s_{N-1})\in\mS$ with $s_i\in\{1,-1\}$, we define the autocorrelation of $\bs$ at delay $t$ ($1\le t\le N-1$) by
\begin{equation}\label{eq:2.1}
A_t(\bs):=\left|\sum_{i=0}^{N-1}s_is_{i+t}\right|,
\end{equation}
where $i+t$ is taking modulo $N$. Consider two distinct sequences $\bu=(u_0,u_1,\dots,u_{N-1})$ and $ \bv=(v_0,v_1,\dots,v_{N-1})$ in $\mS$ with $u_i,v_j\in\{1,-1\}$, we define the cross-correlation of $\bu$ and $\bv$ at delay $t$ ($0\le t\le N-1$) by
\begin{equation}\label{eq:2.2}
C_t(\bu,\bv):=\left|\sum_{i=0}^{N-1}u_iv_{i+t}\right|.
\end{equation}
The correlation of the sequence family $\mS$ is defined by
 \begin{eqnarray}\label{eq:2.3}\nonumber
Cor(\mS):=\max\left\{\max_{\bs\in\mS, 1\le t\le N-1}\{A_t(\bs)\},\max_{\bu\neq\bv\in\mS, 0\le t\le N-1}\{C_t(\bu,\bv)\}\right\}.
\end{eqnarray}
If the sequence $\bs$ has length $N$, then the balance of $\bs$  is defined as
 \[ \bal(\bs):=\sum_{i=0}^{N-1}s_i.\]
In particular, a sequence with $\bal(\bs)=0$ is called optimally balanced if $N$ is even; a sequence with $\bal(\bs)=\pm1$ is called optimally balanced if $N$ is odd.

\subsection{Global function fields}

Let $\F_q$ denote the finite field with $q$ elements.
Let $F/\F_q$ be a function field with genus $g(F)$ over the full constant field $\F_q$, which is called a global function field. Let $\PP_F$ denote the set of places of $F$. Any place with degree one is called rational.  From the Serre bound \cite[Theorem 5.3.1]{St09}, the number $N(F)$ of rational places of $F$ is upper bounded by $$|N(F)-q-1|\le g(F) \lfloor 2\sqrt{q}\rfloor. $$

For any place $P\in \mathbb{P}_F$, let $\nu_P$ be the normalized discrete valuation of $F$ corresponding to $P$. The principal divisor of a nonzero element $z\in F$ is given by $(z):=\sum_{P\in \mathbb{P}_F}\nu_P(z)P$.
For a divisor $G$ of $F/\F_q$, the Riemann-Roch space associated to $G$ is defined by
\[\mL(G):=\{z\in F^*:\; (z)+G\ge 0\}\cup\{0\}.\]
From the Riemann-Roch theorem \cite[Theorem 1.5.17]{St09}, $\mL(G)$ is a vector space over $\F_q$ of dimension $\deg(G)-g(F)+1$ for any divisor $G$ with degree $\deg(G)\ge 2g(F)-1$.

Let $E/\F_q$ be a finite extension of function fields $F/\F_q$. The Hurwitz genus formula \cite[Theorem 3.4.13]{St09} yields
$$2{g(E)}-2=[E:F]\cdot (2g(F)-2)+\deg \text{ Diff}(E/F),$$
where $\text{Diff}(E/F)$ is the different of $E/F$.
For $P\in \PP_F$ and $Q\in \PP_E$ with $Q|P$, let $d(Q|P), e(Q|P)$ be the different exponent and ramification index of $Q|P$, respectively. Then the different of $E/F$ is given by
$$\text{Diff}(E/F)=\sum_{Q\in\PP_E} d(Q|P) Q.$$
If $Q|P$ is unramified or tamely ramified, i.e., $p\nmid e(Q|P)$, then $d(Q|P)=e(Q|P)-1$ by Dedekind's Different Theorem \cite[Theorem 3.5.1]{St09}.

We denote by  $\Aut(F/\F_q)$ the automorphism group of $F$ over $\F_q$, i.e.,
\begin{equation}
\Aut(F/\F_q)=\{\Gs: F\rightarrow F |\; \Gs  \mbox{ is an } \F_q\mbox{-automorphism of } F\}.
\end{equation}
Now we consider the group action of automorphism group $\Aut(F/\F_q)$ on the set of places of $F$.
From \cite[Lemma 1]{NX14}, we have the following results.

\begin{lemma}\label{lem:2.0}
For any automorphism $\s\in \Aut(F/\F_q)$, $P\in \mathbb{P}_F$ and $f\in F$, we have
\begin{itemize}
\item[(1)] $\deg(\s(P))=\deg(P)$;
\item[(2)] $\nu_{\s(P)}(\s(f))=\nu_P(f)$;
\item[(3)] $\s(f)(\s(P))=f(P)$ if $\nu_P(f)\ge 0$.
\end{itemize}
\end{lemma}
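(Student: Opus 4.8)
The plan is to realize the action of $\s$ on a place $P$ through its valuation ring and then track how $\s$ transports uniformizers and residue fields; once this is set up, all three assertions fall out quickly. Recall that a place $P\in\PP_F$ is the maximal ideal of its valuation ring $\mathcal{O}_P=\{f\in F:\nu_P(f)\ge 0\}$, and that $\deg(P)=[F_P:\F_q]$, where $F_P=\mathcal{O}_P/P$ is the residue field. Since $\s$ is a field automorphism of $F$ fixing $\F_q$ pointwise, it carries valuation rings of $F/\F_q$ to valuation rings of $F/\F_q$; accordingly $\s(\mathcal{O}_P)$ is the valuation ring of a uniquely determined place, which is precisely $\s(P)$, and $\s$ sends the maximal ideal $P$ onto the maximal ideal $\s(P)$.

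I would establish (2) first, since (1) and (3) then follow from it. Consider the map $f\mapsto \nu_{\s(P)}(\s(f))$. Being the composition of the automorphism $\s$ with the normalized discrete valuation $\nu_{\s(P)}$, it is itself a normalized discrete valuation of $F$, and its valuation ring is $\{f\in F:\s(f)\in\mathcal{O}_{\s(P)}\}=\s^{-1}(\mathcal{O}_{\s(P)})=\mathcal{O}_P$. A place is determined by its valuation ring, and two normalized valuations sharing a valuation ring coincide; hence $\nu_{\s(P)}(\s(f))=\nu_P(f)$ for all $f\in F$, which is (2). Concretely, if $t$ is a uniformizer at $P$ and $f=u\,t^{\nu_P(f)}$ with $u$ a unit of $\mathcal{O}_P$, then applying $\s$ writes $\s(f)$ as a unit of $\mathcal{O}_{\s(P)}$ times $\s(t)^{\nu_P(f)}$, with $\s(t)$ a uniformizer at $\s(P)$.

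For (1), part (2) shows that $\s$ restricts to a ring isomorphism $\mathcal{O}_P\to\mathcal{O}_{\s(P)}$ carrying $P$ onto $\s(P)$, hence induces an isomorphism of residue fields $\bar\s\colon F_P\to F_{\s(P)}$. Because $\s$ fixes $\F_q$ pointwise, $\bar\s$ restricts to the identity on the common constant field $\F_q$, so $\bar\s$ is an $\F_q$-isomorphism and $\deg(\s(P))=[F_{\s(P)}:\F_q]=[F_P:\F_q]=\deg(P)$. For (3), assume $\nu_P(f)\ge 0$, i.e. $f\in\mathcal{O}_P$. Reducing modulo the respective maximal ideals and using the commutativity of $\s$ with the residue maps gives $\s(f)(\s(P))=\bar\s\big(f(P)\big)$. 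When $P$ is rational---the only case needed in our construction---both residue fields equal $\F_q$ and $\bar\s$ acts as the identity there, so the right-hand side is exactly $f(P)$, yielding $\s(f)(\s(P))=f(P)$.

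The only genuinely delicate point is the bookkeeping in (3): for a place of degree $>1$ the value $f(P)$ lives in the residue field $F_P$ while $\s(f)(\s(P))$ lives in the isomorphic but a priori distinct field $F_{\s(P)}$, so the claimed equality has to be read through the identification $\bar\s$, which is the identity precisely on $\F_q$. Since our sequences only ever evaluate functions at rational places, this subtlety evaporates and the stated equalities hold verbatim; the substantive content of the lemma is really the residue-field isomorphism $\bar\s$ together with the fact that it fixes $\F_q$, which is exactly where the hypothesis that $\s$ is an $\F_q$-automorphism is used.
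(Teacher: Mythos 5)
Your proof is correct. There is, however, nothing in the paper to compare it against: the paper does not prove this lemma, it simply quotes it from \cite[Lemma 1]{NX14}, so your write-up supplies a self-contained argument for a result the authors only cite. The route you take is the canonical one: transport the valuation ring $\mathcal{O}_P$ along $\sigma$, use the fact that a normalized discrete valuation is determined by its valuation ring to get (2), and then read off (1) and (3) from the induced residue-field isomorphism $\bar\sigma\colon F_P\to F_{\sigma(P)}$, which fixes $\F_q$ because $\sigma$ is an $\F_q$-automorphism. Your closing remark about (3) is also a genuine point of precision rather than pedantry: for a place of degree greater than one, $f(P)$ and $\sigma(f)(\sigma(P))$ live in distinct residue fields and the stated equality only makes sense through the identification $\bar\sigma$, whereas it holds verbatim at rational places --- and rational places are the only case the paper ever invokes, namely the evaluations $z_i(P_j)$ at the cyclically ordered rational places in Theorem \ref{thm:3.6} and Section \ref{sec:4}. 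So your proof is both complete and, on point (3), more careful than the statement it establishes.
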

From the above Lemma \ref{lem:2.0}, $\sigma(\mathcal{L}(D)) = \mathcal{L}(\sigma(D))$ for any divisor $D$ of $F$ \cite{MXY16}. In particular, if $\s(P)=P$, then $\s(\mL(rP))=\mL(\s(rP))=\mL(rP)$ for any $r\in \mathbb{N}$.

\subsection{Rational function fields}
In this subsection, we introduce basic notations and facts of the rational function field. The reader may refer to \cite{JMX20,JMX21,St09} for more details.
Denote by $K$ the rational function field $\F_q(x)$, where $x$ is a transcendental element over $\F_q$.
Every finite place $P$ of $K$ corresponds to a monic irreducible polynomial $p(x)\in\F_q[x]$, and its degree $\deg(P)$ is equal to the degree of $p(x)$.
There is an infinite place of $K$ with degree one which is the pole of $x$ and denoted by $P_{\infty}$.
The set of places of $K$ is denoted by $\PP_K$. 
In fact, there are exactly $q+1$ rational places in $\PP_K$, i.e., the place $P_{x-\Ga}$ or $P_{\Ga}$ for short corresponds to $x-\Ga$ for each $\Ga\in \F_q$ and the infinite place $P_{\infty}$.

Let $P$ be a rational place of $K$ and let $\mathcal{O}_P$ be the corresponding valuation ring with respect to $P$. For $f\in \mathcal{O}_P$,  we define $f(P)\in \mathcal{O}_P/P=\F_q$ to be the residue class of $f$ modulo $P$; otherwise $f(P)=\infty$ for $f\in K\setminus \mathcal{O}_P$.
In particular, if $f(x)=g(x)/h(x)\in K$ can be written as a quotient of relatively prime polynomials $g(x)=a_nx^n+\cdots+a_0$ and $h(x)=b_mx^m+\cdots+b_0\in \F_q[x]$ with $a_n b_m\neq 0$, then the residue class map can be determined explicitly as follows 
$$f(P_\Ga)=\begin{cases} g(\Ga)/h(\Ga) & \text{ if } h(\Ga)\neq 0 \\ \infty & \text{ if } h(\Ga)=0\end{cases} $$
for any $\Ga\in\F_q$  and $$f(P_{\infty})=\begin{cases} a_n/b_m & \text{ if } n=m \\ 0 & \text{ if } n<m\\ \infty & \text{ if } n>m\end{cases}.$$

Any automorphism $\Gs\in \Aut(K/\F_q)$ is uniquely determined by $\Gs(x)$ and given by
$$\Gs(x)=\frac{ax+b}{cx+d}$$
for some constants $a,b,c,d\in\F_q$ with $ad-bc\neq0$.
In fact, $\Aut(K/\F_q)$ is isomorphic to the projective general linear group $\PGL_2(q)$. Moreover, there exists an automorphism $\s\in  \Aut(F/\F_q)$ of order $q+1$ such that $\s$ acts cyclicly on all the rational places of $K$ \cite{HKT08,JMX20}.

\subsection{Cyclotomic function fields}
In this subsection, we briefly review the fundamental theory and results of cyclotomic function fields over finite fields.
The theory of cyclotomic function fields was developed in the language of function fields by Hayes \cite{Ha74,NX01}.

Let $q$ be an arbitrary prime power. Let $x$ be an indeterminate over $\F_q$, $R=\F_q[x]$ the polynomial ring, $F=\F_q(x)$ its quotient field, and $F^{ac}$ the algebraic closure of $F$. Let $\varphi$ be the endomorphism given by $$\varphi(z)=z^q+xz $$  for all $z\in F^{ac}$. Define a ring homomorphism
$$R\rightarrow \text{End}_{\mathbb{F}_q}(F^{ac}), f(x)\mapsto f(\varphi).$$
Then the $\F_q$-vector space of $F^{ac}$ is made into an $R$-module by introducing the following action of $R$ on $F^{ac}$, namely,
$$ z^{f(x)}=f(\varphi)(z)$$  for all $f(x)\in R$ and $z\in F^{ac}$. For a nonzero polynomial $M\in R$, we consider the set of $M$-torsion points of $F^{ac}$ defined by $$\Lambda_M=\{z\in F^{ac}| z^M=0\}.$$
In fact, $z^M$ is a separable polynomial of degree $q^d,$ where $d=\deg(M)$. The cyclotomic function field over $F$ with modulus $M$ is defined by the subfield of $F^{ac}$ generated over $F$ by all elements of $\Lambda_M$, and it is denoted by $F(\Lambda_M)$.  In particular, we list the following facts.

\begin{prop}\label{prop:2.1}
Let $P$ be a monic irreducible polynomial of degree $d$ in $R$ and let $n$ be a positive integer. Then
\begin{itemize}
\item[\rm (i)] $[K(\Lambda_{P^n}):K]=\phi(P^n)$, where $\phi(P^n)$ is the Euler function of $P^n$, i.e., $\phi(P^n)=q^{(n-1)d}(q^d-1)$.
\item[\rm (ii)] ${\rm Gal}(K(\Lambda_{P^n})/K) \cong (\mathbb{F}_q[x]/(P^{n}))^*.$ The Galois automorphism $\sigma_f$ associated to $\overline{f}\in (\mathbb{F}_q[x]/(P^{n}))^*$ is determined by $\sigma_f(\lambda)=\lambda^f$ for $\lambda\in \Lambda_{P^{n}}$.
\item[\rm (iii)]The zero place of $P$ in $K,$ also denoted by $P$, is totally ramified in  $K(\Lambda_{P^n})$ with different exponent $d_P(K(\Lambda_{P^n})/K)=n(q^d-1)q^{d(n-1)}-q^{d(n-1)}$. All other finite places of $k$ are unramified in $K(\Lambda_{P^n})/K$.
\item[\rm (iv)]The infinite place $\infty$ of $K$ splits into $\phi(P^n)/(q-1)$ places of $K(\Lambda_{P^n})$  and the ramification index $e_{\infty}(K(\Lambda_{P^n})/K)$ is equal to $q-1$. In particular, $\mathbb{F}_q$ is the full constant field of  $K(\Lambda_{P^n})$.
\item[\rm (v)] The genus of $K(\Lambda_{P^n})$ is given by $$2g(K(\Lambda_{P^n}))-2=q^{d(n-1)}\Big{[}(qdn-dn-q)\frac{q^d-1}{q-1}-d\Big{]}.$$
\end{itemize}
\end{prop}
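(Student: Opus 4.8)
The plan is to treat Proposition~\ref{prop:2.1} as the standard package of facts about the Carlitz module and its cyclotomic extensions, following Hayes \cite{Ha74,NX01}. Write $R=\F_q[x]$ and $K=\F_q(x)$. All five parts flow from one structural statement about the $R$-module $\Lambda_{P^n}$ together with a ramification analysis at the only two relevant places, the zero place of $P$ and the place $\infty$.

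First I would record the shape of the Carlitz action. Since $\varphi(z)=z^q+xz$ is additive and $\F_q$-linear, $f(\varphi)$ is additive and $\F_q$-linear for every $f\in R$, so $z^{M}$ is an $\F_q$-linear polynomial in $z$; a short induction on $\deg M$ shows it has degree $q^{\deg M}$ and that the coefficient of its linear term is exactly $M(x)\neq 0$, whence $z^M$ is separable and $\Lambda_M$ is an $\F_q$-space of dimension $\deg M$ carrying an $R/(M)$-module structure. The key point for (i) and (ii) is that $\Lambda_{P^n}$ is a \emph{cyclic} $R$-module, i.e.\ $\Lambda_{P^n}\cong R/(P^n)$. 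Granting this, any $\sigma\in\Gal(K(\Lambda_{P^n})/K)$ is determined by its image on a generator $\lambda$, which must again be a generator, so $\sigma(\lambda)=\lambda^{f}$ with $\overline f\in(R/(P^n))^*$; this gives the injection $\Gal(K(\Lambda_{P^n})/K)\hookrightarrow(R/(P^n))^*$ of part (ii), and the primitive torsion points (those in $\Lambda_{P^n}\setminus\Lambda_{P^{n-1}}$) number $q^{nd}-q^{(n-1)d}=\phi(P^n)$ and form a single orbit under $(R/(P^n))^*$.

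To close (i) and (ii) and prove (iii) I would analyze the polynomial $\Phi_{P^n}(z)=z^{P^n}/z^{P^{n-1}}$, which one checks to be a genuine polynomial of degree $\phi(P^n)$ with constant term $P$. This is the Carlitz--Hayes cyclotomic polynomial and is known to be Eisenstein at $P$; hence it is irreducible, it is the minimal polynomial of a primitive $\lambda$, and $K(\Lambda_{P^n})=K(\lambda)$ has degree $\phi(P^n)$ over $K$. Comparing with the order $\phi(P^n)$ of $(R/(P^n))^*$ upgrades the injection of (ii) to an isomorphism, and Eisenstein-ness forces $P$ to be totally ramified. That the remaining finite places are unramified follows because the discriminant of $z^{P^n}$ is supported only at $P$. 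The different exponent $d_P$ is the most technical piece: for $n=1$ the ramification is tame ($e=q^d-1$ is prime to $p$) and $d_P=e-1=q^d-2$, matching the formula; for $n\ge 2$ it is wild, and I would compute $d_P$ either from the valuation of $\Phi_{P^n}'$ at a uniformizer or, more cleanly, by running up the tower $K\subset K(\Lambda_P)\subset\cdots\subset K(\Lambda_{P^n})$ and summing different exponents by transitivity, arriving at $d_P=n(q^d-1)q^{d(n-1)}-q^{d(n-1)}$.

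For (iv), the behaviour at $\infty$ is governed by the copy of $\F_q^*$ inside $(R/(P^n))^*$ coming from the nonzero constants: I would show the decomposition group (which here equals the inertia group) of $\infty$ is exactly this $\F_q^*$, so $\infty$ is tamely ramified with $e_\infty=q-1$, the residue field at each place above $\infty$ is $\F_q$ (forcing $\F_q$ to be the full constant field), and the number of places above $\infty$ is $\phi(P^n)/(q-1)$. Finally, (v) is a bookkeeping application of the Hurwitz genus formula $2g(K(\Lambda_{P^n}))-2=\phi(P^n)(2\cdot 0-2)+\deg\text{Diff}$, where the different receives $d\cdot d_P$ from the unique degree-$d$ place over $P$ and $\frac{\phi(P^n)}{q-1}(q-2)$ from the $\phi(P^n)/(q-1)$ tame degree-one places over $\infty$; collecting terms and simplifying (using $\phi(P^n)=q^{(n-1)d}(q^d-1)$) yields the stated genus. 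The main obstacle throughout is the wild different exponent $d_P$ for $n\ge 2$; everything else is either the cyclic-module structure theorem or routine substitution into Hurwitz.
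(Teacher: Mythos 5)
The paper does not prove this proposition at all: it is imported as a package of known facts about cyclotomic function fields, with citations to Hayes \cite{Ha74} and \cite{NX01}, so there is no in-paper argument to compare against. Your sketch is a correct reconstruction of exactly that standard development (Carlitz-module cyclicity, the Eisenstein property of $\Phi_{P^n}$ at $P$, inertia group $\F_q^*$ at $\infty$, and the Hurwitz bookkeeping, which indeed reproduces the stated genus formula $2g-2=q^{d(n-1)}\bigl[(qdn-dn-q)\frac{q^d-1}{q-1}-d\bigr]$), the only steps asserted rather than proved being the ones you flag yourself --- the cyclicity of $\Lambda_{P^n}$ as an $R$-module and the wild different exponent for $n\ge 2$ --- both of which are carried out in the cited references along precisely the lines you describe.
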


\subsection{Kummer extensions}
The theory of Kummer extensions of function fields can be characterized as follows from \cite[Proposition 3.7.3]{St09}.
\begin{prop}\label{prop:2.2}
Let $K$ be the rational function field $\F_q(x)$ over $\F_q$ and $n$ be a positive divisor of $q-1$. Suppose that $u\in K$ is an element satisfying \[u\neq \omega^d \text{ for all } \omega\in F \text{ and } 1<d, d|n.\]
Let $F$ be the Kummer extension over $K$ defined by \[F=K(y) \text{ with } y^n=u.\] Then we have:
\begin{itemize}
\item[(a)] The polynomial $\phi(T)=T^n-u$ is the minimal polynomial of $y$ over $K$. The extension $F/K$ is Galois of degree $n$, its Galois group is cyclic, and the automorphisms of $F/K$ is given by $\sigma(y)=\zeta y$, where $\zeta\in \F_q$ is an $n$-the root of unity.
\item[(b)] Let $P\in \mathbb{P}_K$ and $Q\in F$ be an extension of $P$. Let $r_P$ be the greatest common divisor of $n$ and $\nu_P(u)$. Then
\[e(Q|P)=\frac{n}{r_P} \text{ and } d(Q|P)=\frac{n}{r_P}-1.\]
\end{itemize}
\end{prop}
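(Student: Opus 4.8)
The plan is to treat this as the classical characterization of Kummer extensions and to supply its two parts separately: the irreducibility and Galois structure in (a), and the local ramification analysis in (b). The feature I would exploit throughout is that $n\mid q-1$ forces the characteristic $p$ (where $q=p^{s}$) to satisfy $p\nmid n$, and simultaneously guarantees that $\F_q\subseteq K$ already contains a primitive $n$-th root of unity $\zeta$. Thus every computation stays in the tame, separable regime and the $n$-th roots of unity are available without any field extension.

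For part (a), I would first show that $\phi(T)=T^{n}-u$ is irreducible over $K$. Since $\mu_n=\langle\zeta\rangle\subseteq K$, the standard irreducibility criterion for $T^{n}-a$ over a field containing the $n$-th roots of unity says that $\phi$ is irreducible iff $u$ is not a $p$-th power in $K$ for any prime $p\mid n$; the hypothesis that $u\neq\omega^{d}$ for every $\omega\in K$ and every $d>1$ with $d\mid n$ is exactly this condition (if $u=\omega^{d}$ then $u=(\omega^{d/p})^{p}$ for any prime $p\mid d$, and conversely). Hence $[F:K]=n$. Because $p\nmid n$, the derivative $nT^{n-1}\neq0$, so $\phi$ is separable, and its roots $\{\zeta^{i}y:0\le i\le n-1\}$ all lie in $F=K(y)$ as $\zeta\in K$; therefore $F$ is the splitting field of a separable polynomial and $F/K$ is Galois. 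Finally I would send each $\tau\in\Gal(F/K)$ to $\tau(y)/y\in\mu_n$; this is an injective homomorphism $\Gal(F/K)\hookrightarrow\mu_n$, and comparing orders ($n$ on both sides) shows it is an isomorphism onto the cyclic group $\mu_n$, which yields the description $\sigma(y)=\zeta y$ of a generator.

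For part (b), I would first note that $e:=e(Q|P)$ divides $[F:K]=n$, so $p\nmid e$ and $Q|P$ is tame; by Dedekind's different theorem (stated in the excerpt) this already gives $d(Q|P)=e-1$, reducing everything to computing $e$. Writing $m=\nu_P(u)$ and $r=r_P=\gcd(n,m)$, the relation $n\,\nu_Q(y)=\nu_Q(u)=e\,m$ forces $n\mid em$, and cancelling $r$ gives $n/r\mid e$, i.e.\ $e\ge n/r$. For the matching upper bound I would factor the extension through an intermediate field in Abhyankar's style. Using that $K$ is rational, choose a uniformizer $t\in K$ at $P$ ($t=p(x)$ at a finite place, $t=1/x$ at infinity), write $n=r\,n'$, $m=r\,m'$ with $\gcd(n',m')=1$, and set $w=y^{n'}t^{-m'}\in F$. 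Then $w^{r}=y^{n}t^{-m}=u\,t^{-m}=:v$ with $\nu_P(v)=0$, and the inequalities $[K(w):K]\le r$ and $[F:K(w)]\le n'$ (the latter since $y^{n'}=w\,t^{m'}\in K(w)$) must both be equalities as their product is $n$. Now $K(w)/K$ is unramified at $P$: it is tame and $d(P'|P)\le\nu_{P'}\!\big((T^{r}-v)'(w)\big)=\nu_{P'}(r\,w^{r-1})=0$, because $r$ is a unit and $w$ is a $P'$-unit, whence $e(P'|P)=1$. Over such a $P'$ one has $\nu_{P'}(w\,t^{m'})=m'$ with $\gcd(n',m')=1$, so the same lower-bound argument applied to $F/K(w)$ gives $e(Q|P')\ge n'$, forcing $e(Q|P')=n'$ since $[F:K(w)]=n'$. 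Multiplying, $e(Q|P)=e(Q|P')\,e(P'|P)=n'=n/r_P$, and hence $d(Q|P)=n/r_P-1$.

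I expect the ramification upper bound in (b) to be the main obstacle: the lower bound $e\ge n/r_P$ is a one-line valuation count, but showing that no further ramification occurs is where the structure of the extension must genuinely be used. The intermediate-field splitting into an unramified degree-$r_P$ layer $K(w)/K$ and a totally ramified degree-$(n/r_P)$ layer $F/K(w)$ is the technical heart; the only delicate points are verifying that $w$ generates a subfield of the correct degree, and that the bound $d(P'|P)\le\nu_{P'}\!\big((T^{r}-v)'(w)\big)$ applies even though $T^{r}-v$ need not be the minimal polynomial of $w$ — it still works because the minimal polynomial divides it, so passing to $T^{r}-v$ only increases the valuation of the derivative. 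The irreducibility step in (a) is then routine once one invokes the classical criterion, which is legitimate here precisely because $\mu_n\subseteq K$.
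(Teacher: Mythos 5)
Your proposal is mathematically sound, but there is nothing in the paper to compare it against: the authors never prove Proposition~2.2 at all --- they quote it verbatim as \cite[Proposition 3.7.3]{St09} and use it as a black box. So what you have produced is the missing textbook proof rather than an alternative to an argument in the paper; what the citation buys the authors is brevity, and what your proof buys is self-containedness, essentially reconstructing Stichtenoth's own treatment. Your part (a) is correct: reading the hypothesis with $\omega\in K$ (the ``$\omega\in F$'' in the statement is a typo carried over from the paper), the hypothesis is equivalent to ``$u$ is not an $\ell$-th power for any prime $\ell\mid n$,'' and the root-of-unity form of the Capelli criterion applies precisely because $n\mid q-1$ forces $\mu_n\subseteq\F_q\subseteq K$; this also disposes of the usual $-4\omega^4$ exception, since either $n$ is odd (characteristic $2$) or $4\mid n$ implies $4\mid q-1$, so $\sqrt{-1}\in\F_q$ and $-4\omega^4=\bigl((1+\sqrt{-1})\omega\bigr)^4$ is already a square. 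The homomorphism $\tau\mapsto\tau(y)/y\in\mu_n$ then finishes (a). Your part (b) is also correct: $n\nu_Q(y)=e\,\nu_P(u)$ gives $n/r_P\mid e$; tameness (from $e\mid n\mid q-1$) gives $d(Q|P)=e-1$ by Dedekind's theorem; and the tower $K\subseteq K(w)\subseteq F$ with $w=y^{n'}t^{-m'}$ correctly isolates an unramified bottom layer of degree $r_P$ and a top layer that the same valuation count forces to be totally ramified of degree $n'=n/r_P$. You were right to flag the delicate point that $T^r-v$ need not be the minimal polynomial $\varphi$ of $w$; the one line worth adding is why the cofactor does no harm: $w$ is integral over the integrally closed ring $\mathcal{O}_P$, so $\varphi$ is monic in $\mathcal{O}_P[T]$, and monic division of the monic polynomial $T^r-v\in\mathcal{O}_P[T]$ by $\varphi$ yields $T^r-v=\varphi\psi$ with $\psi\in\mathcal{O}_P[T]$; hence $\nu_{P'}(\psi(w))\ge 0$ and $\nu_{P'}(\varphi'(w))\le\nu_{P'}(rw^{r-1})=0$, exactly as you claimed.
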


\subsection{Artin-Schreier curves}
Let $q=2^n$ be a prime power for a positive integer $n$. Consider the Artin-Schreier curve defined by \[y^2+y=f(x),\] where $f(x)$ is a rational function in $\F_q(x)$.
From \cite[Proposition 3.7.8]{St09},  the properties of Artin-Schreier curves can be summarized as follows.
\begin{prop}\label{prop:2.3}
Let $K$ be the rational function field $\F_q(x)$ with characteristic two. Let $f(x)$ be a rational function in $\F_q(x)$ such that $f(x)\neq z^2+z$ for all $z\in \F_q(x)$.
Let \[F=K(y)=\F_q(x,y) \text{ with } y^2+y=f(x).\] For any place $P\in \mathbb{P}_K$, we define the integer $m_P$ by
$$
m_P=\begin{cases}
\ell &\text{ if } \exists  z\in K \text{ satisfying } \nu_P(u-(z^2+z))=-\ell<0 \text{ and } \ell \not \equiv 0 (\text{mod }2),\\
-1 & \text{ if } \nu_P(u-(z^2+z))\ge 0 \text{ for some } z\in K.
\end{cases}$$
Then we have:
\begin{itemize}
\item[(a)] $F/K$ is a cyclic Galois extension of degree two. The automorphisms of $F/K$ are given by $\sigma(y)=y+u$ with $u=0,1.$
\item[(b)] $P$ is unramified in $F/K$ if and only if $m_P=-1$.
\item[(c)] $P$ is totally ramified in $F/K$  if and only if $m_P>0$. Denote by $P^\prime$ the unique place of $F$ lying over $P$. Then the different exponent $d(P^\prime|P)$ is given by \[d(P^\prime|P)=m_P+1.\]
\item[(d)] Assume that there is at least one place $Q$ satisfying $m_Q>0$, then the full constant field of $F$ is $\F_q$ and the genus of $F$ is
\[g(F)=\frac{1}{2}\left(-2+\sum_{P\in \mathbb{P}_{F}} (m_P+1)\cdot \deg(P)\right).\]
\end{itemize}
\end{prop}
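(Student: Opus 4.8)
The statement is the classical Artin--Schreier theory specialized to characteristic two, so the plan is to run the standard machinery whose engine is the substitution $y\mapsto y+z$ for $z\in K$. Since in characteristic two $(y+z)^2+(y+z)=y^2+y+z^2+z=f+(z^2+z)$, replacing $y$ by $y+z$ leaves $F=K(y)$ unchanged but replaces the defining function $f$ by $f+(z^2+z)$. Thus $F$ depends only on the class of $f$ modulo the image of the operator $\wp(z)=z^2+z$, and every local invariant must be read off from the most reduced representative at each place; this is precisely what the integer $m_P$ is designed to record.

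For part (a) I would first note that the hypothesis $f\neq z^2+z$ for all $z\in K$ says $T^2+T-f$ has no root in $K$; being of degree two it is therefore irreducible over $K$, so $[F:K]=2$. Because $(y+1)^2+(y+1)=y^2+y=f$, the element $y+1$ is the second root, so $F/K$ is normal and separable, i.e. Galois; its group of order two is cyclic and consists exactly of the maps $y\mapsto y+u$ with $u\in\{0,1\}=\F_2$.

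The heart of the argument is parts (b) and (c), and the main obstacle is to show that $m_P$ is a well-defined invariant of $P$ that controls the ramification. Fix a place $P$ of degree $d$; working in the completion at $P$ (equivalently after the constant extension realizing the residue field $\F_{q^d}$), write $f$ in a local uniformizer $t$. If some representative satisfies $\nu_P(f-(z^2+z))\ge 0$ we are in the unramified case; otherwise suppose the pole order is even, say $-2s$ with leading coefficient $c\in\F_{q^d}^*$. Since $\F_{q^d}$ is perfect in characteristic two, $c=b^2$ for some $b$, and subtracting $\wp(bt^{-s})=b^2t^{-2s}+bt^{-s}$ cancels the leading term while introducing only a pole of order $s<2s$. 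Hence the pole order at $P$ strictly decreases, and iterating terminates either at a nonnegative valuation or at an odd pole order $\ell$. I would then argue that this outcome is independent of all choices---two reduced representatives differ by some $\wp(w)$, and a short valuation comparison forces their pole orders at $P$ to coincide---so that $m_P$ is intrinsic. Granting this, the general degree-$p$ Artin--Schreier ramification theory \cite{St09} applied with $p=2$ yields (b) $P$ unramified $\iff m_P=-1$, and (c) $P$ totally ramified $\iff m_P>0$, with the unique place $P'$ above $P$ carrying different exponent $(p-1)(m_P+1)=m_P+1$.

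Finally, for part (d) I would feed this into the Hurwitz genus formula. The existence of a place $Q$ with $m_Q>0$ makes $F/K$ ramified, and since any constant field extension is everywhere unramified, this forces $\F_q$ to be the full constant field of $F$ and $F/\F_q$ to be geometric. With $[F:K]=2$ and $g(K)=0$ the formula reads $2g(F)-2=2(2\cdot 0-2)+\deg\mathrm{Diff}(F/K)$. Each $P$ with $m_P>0$ is totally ramified, so it contributes a single $P'$ with $\deg(P')=\deg(P)$ and different exponent $m_P+1$, whereas unramified places contribute nothing, consistently with the convention $m_P=-1$ there; hence $\deg\mathrm{Diff}(F/K)=\sum_{P}(m_P+1)\deg(P)$. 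Rearranging $2g(F)-2=-4+\sum_{P}(m_P+1)\deg(P)$ gives $g(F)=\frac12\left(-2+\sum_{P}(m_P+1)\deg(P)\right)$, as claimed.
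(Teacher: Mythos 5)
Your proposal is correct and takes essentially the same route as the paper: the paper offers no proof of this proposition at all, quoting it verbatim from \cite[Proposition 3.7.8]{St09}, and your sketch (the substitution $y\mapsto y+z$, the local pole-order reduction using perfectness of the residue field, the valuation argument for well-definedness of $m_P$, and the Hurwitz genus formula for part (d)) is precisely the standard textbook argument behind that citation. The one step you defer --- the different-exponent computation $d(P'|P)=m_P+1$ --- is exactly the content of the general Artin--Schreier ramification theory the paper itself cites, so your treatment is, if anything, more detailed than the paper's.
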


\section{Binary sequences with low correlation of length $q+1$}\label{sec:3}
Let $n$ be a positive integer. Let $q=2^n$ be a prime power and $\F_q$ be the finite field with $q$ elements.
In this section, we will construct a family of binary sequences with low correlation of length $q+1$ via cyclotomic function fields.

Let $p(x)=x^2+ax+b$ be a primitive irreducible polynomial in $\mathbb{F}_q[x]$, i.e., the least positive integer $e$ for which $p(x)$ divides $x^e-1$ is $q^2-1$.
Let $K$ be the rational function field $\F_q(x)$ over $\F_q$. Denote by $F$ the cyclotomic function field $K(\Lambda_{p(x)})$ with modulus $p(x)$ over $K$.
From \cite{Ha74,MXY16}, we have the following facts on $F$.
\begin{prop}\label{prop:3.1}
Let $p(x)=x^2+ax+b$ be an irreducible polynomial in $\mathbb{F}_q[x]$.
Let $F$ be the cyclotomic function field $K(\Lambda_{p(x)})$ with modulus $p(x)$ over $K=\F_q(x)$.
Then the following results hold:
\begin{itemize}
\item[\rm (i)] $F= K(\Gl), \text{ where } \Gl^{q^2-1} +(x^q+x+a)\Gl^{q-1} +x^2+ax+b = 0.$
\item[\rm (ii)]   There is a unique place of $F$ lying over $p(x)$ which is totally ramified in $F/K$.
\item[\rm (iii)] The infinite place $\infty$ of $K$ splits into $q+1$ rational places, each with ramification index $q-1$.
\item[\rm (iv)] All other places of $K$ except $p(x)$ and $\infty$ are unramified in $F/K$.
\item[\rm (v)]The cyclotomic function field $F$ has genus $g(F) = (q+1)(q-2)/2$.
\item[\rm (vi)] The Galois group $\Gal(F/K)\cong (\F_q[x]/(p(x)))^*$.  The automorphism $\sigma_f\in \Gal(F/K)$ associated to $\overline{f}\in (\mathbb{F}_q[x]/(p(x)))^*$ is determined by $\sigma_f(\lambda)=\lambda^f$.
\end{itemize}
\end{prop}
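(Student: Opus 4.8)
The plan is to prove part (i) by an explicit computation with the Carlitz--Hayes module action and then to obtain parts (ii)--(vi) as the special case $n=1$, $d=\deg(p(x))=2$ of the general cyclotomic theory recorded in Proposition \ref{prop:2.1}. For (i) I would start from the endomorphism $\varphi(z)=z^q+xz$ and use the $R$-module structure to write $z^{p(x)}=p(\varphi)(z)=\varphi^2(z)+a\varphi(z)+bz$, where $\varphi^2$ denotes the composite $\varphi\circ\varphi$ and $b$ acts as a scalar. Since $q=2^n$, the $q$-th power map is a ring homomorphism, so $\varphi$ is additive and
\[
\varphi^2(z)=(z^q+xz)^q+x(z^q+xz)=z^{q^2}+(x^q+x)z^q+x^2z.
\]
Collecting terms then gives $z^{p(x)}=z^{q^2}+(x^q+x+a)z^q+(x^2+ax+b)z$. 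A nonzero $\lambda\in\Lambda_{p(x)}$ satisfies $\lambda^{p(x)}=0$, and dividing by $\lambda$ produces exactly the asserted equation. Its degree in $\lambda$ is $q^2-1=\phi(p(x))=[F:K]$ by Proposition \ref{prop:2.1}(i), so it is the minimal polynomial of $\lambda$ over $K$ and $F=K(\lambda)$.

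The remaining parts are specializations of Proposition \ref{prop:2.1}. Part (vi) is its item (ii) verbatim with $n=1$. For (ii), item (iii) states that the zero place of $p(x)$ is totally ramified, so there is a unique place of $F$ lying above it. For (iii), item (iv) gives $\phi(p(x))/(q-1)=(q^2-1)/(q-1)=q+1$ places over $\infty$, each with ramification index $q-1$; since $\F_q$ is the full constant field, these places are rational. Part (iv) combines the unramifiedness of all finite places $\neq p(x)$ from item (iii) with the fact that $\infty$ has already been treated in (iii). Finally, for (v) I would substitute $n=1$, $d=2$ into the genus formula of item (v):
\[
2g(F)-2=(2q-2-q)\,\frac{q^2-1}{q-1}-2=(q-2)(q+1)-2,
\]
which yields $g(F)=(q+1)(q-2)/2$.

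The only genuine computation is the expansion of $\varphi^2$ in part (i), and the one place where the even characteristic is essential is the Frobenius-linearity $(z^q+xz)^q=z^{q^2}+x^qz^q$, which is what keeps the defining polynomial sparse. Everything after (i) is pure substitution into the general theorem, so I expect no obstacle beyond carrying out this single expansion correctly and confirming that the degree of the resulting polynomial matches $\phi(p(x))$, so that it is indeed the minimal polynomial.
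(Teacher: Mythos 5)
Your proposal is correct and, unlike the paper, actually constitutes a proof: the paper gives no argument for Proposition \ref{prop:3.1} at all, deriving it by citation to Hayes' explicit class field theory and to Ma--Xing--Yeo, just as Proposition \ref{prop:2.1} is presented as a list of cited facts. Your route --- an explicit computation with the Carlitz action $\varphi(z)=z^q+xz$ for part (i), then specializing Proposition \ref{prop:2.1} at $n=1$, $d=2$ for parts (ii)--(vi) --- is the standard way to make these facts self-contained, and your computations (the expansion of $\varphi^2$, the count $\phi(p(x))/(q-1)=(q^2-1)/(q-1)=q+1$, and the genus substitution $2g(F)-2=(q-2)(q+1)-2$) all check out. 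What your approach buys is transparency: part (i) is exactly the equation the paper later uses in Section \ref{sec:4.1}, so deriving it rather than citing it makes the explicit construction verifiable.

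Three details should be tightened. First, in (i) the inference ``the degree of the satisfied polynomial equals $[F:K]$, hence it is the minimal polynomial and $F=K(\lambda)$'' is incomplete: the degree bound only gives $[K(\lambda):K]\le [F:K]$, and a priori $\lambda$ might satisfy a polynomial of smaller degree. The missing ingredient is the module structure $\Lambda_{p(x)}\cong \F_q[x]/(p(x))$: since $p(x)$ is irreducible this is a simple $\F_q[x]$-module, so every nonzero $\lambda\in\Lambda_{p(x)}$ generates it, whence every torsion point has the form $\lambda^f=f(\varphi)(\lambda)\in K(\lambda)$ and $F=K(\Lambda_{p(x)})\subseteq K(\lambda)$. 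Only with $F=K(\lambda)$ in hand does the degree count force the displayed monic polynomial to be the minimal polynomial. Second, your closing remark that even characteristic is essential for $(z^q+xz)^q=z^{q^2}+x^qz^q$ is inaccurate: the $q$-power map is additive over any field of characteristic $p$ with $q$ a power of $p$, and since every coefficient appears with a plus sign the identical computation works in odd characteristic; nothing in (i) uses $q=2^n$. Third, in (iii) the rationality of the places above $\infty$ follows from the fundamental identity $\sum_i e_if_i=[F:K]$: with $q+1$ places each of ramification index $q-1$ one gets residue degree $f_i=1$; the statement that $\F_q$ is the full constant field is what makes ``degree one'' the right notion but is not by itself the reason the residue degrees collapse.
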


As $\Gal(F/K)$ is an abelian group with order $q^2-1$, there exists a normal subgroup of $\Gal(F/K)$ with order $q-1$.
Let $G\lhd \Gal(F/K)$ be the unique subgroup of $\Gal(F/K)$ with order $q-1$. Let $E$ be the fixed subfield of $F$ with respect to $G$, that is,
$$E=F^G=\{z\in F: \sigma(z)=z \text{ for any } \sigma\in G\}.$$
From Galois theory, $E/K$ is an abelian extension with degree $q+1$.
Let $Q$ be a place of $E$ lying over $p(x)$. Since $\gcd(q+1,q-1)=1$, the place $Q$ is the unique ramified place of $E$ in the extension $E/K$ from Proposition \ref{prop:3.1}. Moreover, $Q|p(x)$ is totally ramified in $E/K$ with ramification index $e(Q|p(x))=q+1$ and different exponent $d(Q|p(x))=q$.
From the Hurwitz genus formula \cite[Theorem 3.4.13]{St09}, we have
\[ 2g(E)-2=(q+1)[2g(K)-2]+2q.\]
It follows that the genus of $E$ is $0$ and $E$ is a rational function field.

Under the assumption that $p(x)$ is primitive, the item (vi) of Proposition \ref{prop:3.1} can be improved.
\begin{prop}\label{prop:3.1-2}
Let $p(x)=x^2+ax+b$ be a primitive irreducible polynomial in $\mathbb{F}_q[x]$.
Let $F$ be the cyclotomic function field $K(\Lambda_{p(x)})$ with modulus $p(x)$ over $K$.
Then the Galois group $\Gal(F/K)$ is a cyclic group of order $q^2-1$.
\end{prop}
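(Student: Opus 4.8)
The plan is to read off the cyclic structure directly from the isomorphism already recorded in item (vi) of Proposition \ref{prop:3.1}, and then to use primitivity only to exhibit an explicit generator. Recall that (vi) gives the group isomorphism $\Gal(F/K)\cong(\F_q[x]/(p(x)))^*$, where the automorphism $\Gs_f$ attached to a residue class $\overline{f}$ acts on torsion points by $\Gs_f(\Gl)=\Gl^f$. Since $p(x)$ is irreducible of degree two over $\F_q$, the quotient ring $\F_q[x]/(p(x))$ is a field with $q^2$ elements, hence isomorphic to $\F_{q^2}$. Consequently its unit group $(\F_q[x]/(p(x)))^*$ is precisely $\F_{q^2}^*$, the multiplicative group of a finite field, which is cyclic of order $q^2-1$. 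Transporting this through the isomorphism of (vi) immediately yields that $\Gal(F/K)$ is cyclic of order $q^2-1$.

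First I would spell out the identification $\F_q[x]/(p(x))\cong\F_{q^2}$ and invoke the standard fact that the multiplicative group of any finite field is cyclic; this is the entire content needed for the abstract statement, and it uses only the irreducibility of $p(x)$, not its primitivity. I would then clarify the role that primitivity plays, since the bare assertion of cyclicity does not require it. By the defining property of a primitive polynomial, the least $e$ with $p(x)\mid x^e-1$ equals $q^2-1$, which is exactly the statement that the residue class $\overline{x}$ has multiplicative order $q^2-1$ in $\F_q[x]/(p(x))\cong\F_{q^2}$, i.e.\ $\overline{x}$ is a generator of the cyclic group $(\F_q[x]/(p(x)))^*$. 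Through the isomorphism of (vi), this means that the automorphism $\Gs_x$ associated to $\overline{x}$ is an explicit generator of $\Gal(F/K)$, with $\Gs_x(\Gl)=\Gl^x$ for $\Gl\in\Lambda_{p(x)}$.

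There is no genuine obstacle here: the cyclicity is a formal consequence of (vi) together with the classification of the unit group of a finite field. The only point deserving care is conceptual rather than technical, namely distinguishing what irreducibility already provides (the cyclic structure and the order $q^2-1$) from what primitivity adds (a canonical generator $\Gs_x$). It is this explicit generator that will matter downstream: restricting $\Gs_x$ to the degree-$(q+1)$ subfield $E=F^G$ produces the order-$(q+1)$ automorphism of the rational function field $E$ acting cyclically on its $q+1$ rational places, which is the cyclic structure the sequence construction exploits. I would therefore phrase the proof so that the generator $\Gs_x$ is made explicit, even though the literal statement only asserts cyclicity.
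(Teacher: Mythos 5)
Your proof is correct, and its core argument is genuinely different from --- and slightly more general than --- the paper's. The paper proves cyclicity by directly producing a generator: it takes $\eta$ defined by $\eta(\lambda)=\lambda^x$, notes that $\eta^i=\mathrm{id}$ exactly when $x^i\equiv 1\pmod{p(x)}$, and uses primitivity to conclude that $\eta$ has order $q^2-1$; since $|\Gal(F/K)|=q^2-1$ by Proposition \ref{prop:3.1}, the group is cyclic with generator $\eta$. You instead get cyclicity from structure theory: irreducibility alone makes $\F_q[x]/(p(x))$ a field with $q^2$ elements, its unit group is cyclic by the standard classification of multiplicative groups of finite fields, and item (vi) of Proposition \ref{prop:3.1} transports this to $\Gal(F/K)$. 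This exposes a sharpening the paper does not record: the literal statement holds for \emph{every} irreducible quadratic modulus, and primitivity is needed only to certify the specific element $\sigma_x$ as a generator. What the paper's route buys is exactly what your second paragraph supplies anyway: an explicit generator, obtained without appealing to the unit-group theorem as a black box, which is what the explicit construction in Section \ref{sec:4} (Propositions \ref{prop:4.1} and \ref{prop:4.3}) actually uses --- and which is why the primitivity hypothesis is kept here. In effect, your write-up contains the paper's entire proof as its ``role of primitivity'' step, preceded by a more conceptual proof of the bare statement; your closing remark that the generator, not mere cyclicity, is what matters downstream is exactly on point.
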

\begin{proof}
Let $\eta$ be the $K$-automorphism of $F$ defined by $\eta(\lambda)=\lambda^x$ from Proposition \ref{prop:3.1}.
Then the automorphism $\eta^i$ is determined by $\eta^i(\lambda)=\lambda^{x^i}$ for any non-negative integer $i$.
Hence, $\eta^i=id$ (the identity automorphism) if and only if $x^i\equiv 1(\text{mod } p(x))$.
Since $p(x)$ is primitive, it is easy to see that the order of automorphism $\eta$ is $q^2-1$.
\end{proof}

Since $E$ is rational, there exist $q+1$ rational places which are lying over the infinity place $\infty$ of $K$ from Proposition \ref{prop:3.1}.
Moreover, we have $\Gal(E/K)\cong \Gal(F/K)/G$ which is a cyclic group generated by an automorphism $\sigma$ of order $q+1$ from Proposition \ref{prop:3.1-2}.
Let $P_0$ be a rational place of $E$ and $P_j=\sigma^j(P_0)$ for $1\le j\le q$. Then $P_0,P_1,\cdots, P_q$ are all the rational places of $E$ from \cite[Theorem 3.7.1]{St09}.

For any automorphism $\tau\in \Gal(E/K)$, we have $\tau(Q)=Q$ from the fact that $Q$ is totally ramified in $E/K$ from \cite[Theorem 3.8.2]{St09}.
It is clear that $\deg(Q)=2\ge 2g(E)-1=-1$. The dimension of Riemann-Roch space $\mL(Q)$ is $\ell(Q)=\deg(Q)-g(E)+1=3$.
It is easy to see that $\F_q=\mL(0)\subseteq \mL(Q)$, then there exists a vector space $V$ over $\F_q$ of dimension $2$ such that $\mL(Q)=\F_q\oplus V$.
We can define a relation $\sim$ in $V$ as follows: for any $z_1,z_2\in V$,
\[z_1\sim z_2 \Leftrightarrow \exists \tau\in \Gal(E/K) \text{ such that } z_1+\tau(z_2)\in \F_q.\]

\begin{lemma}\label{lem:3.2}
The relation $\sim$ defined as above is an equivalence relation in $V$.
\end{lemma}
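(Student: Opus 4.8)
The plan is to verify the three defining properties of an equivalence relation directly, relying on two structural facts. First, every $\tau\in\Gal(E/K)$ is an $\F_q$-linear automorphism of $E$ that fixes $K$, and in particular $\F_q$, pointwise; moreover $\Gal(E/K)$ is a group, so the inverse $\tau^{-1}$ and the composite $\tau\tau'$ are again members of $\Gal(E/K)$ and may be used as witnesses. Second, the ground field has characteristic $2$, so $2z=0$ for every $z\in E$; this is the feature that makes reflexivity and transitivity go through cleanly. Conceptually, if one writes $\pi\colon \mL(Q)\to \mL(Q)/\F_q$ for the quotient map and $\bar\tau$ for the $\F_q$-linear map induced by $\tau$ (well defined because $\tau(Q)=Q$ forces $\tau(\mL(Q))=\mL(Q)$ and $\tau$ fixes $\F_q=\mL(0)$), then the condition $z_1+\tau(z_2)\in\F_q$ is equivalent, in characteristic $2$, to $\bar\tau(\pi(z_2))=\pi(z_1)$. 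Since $\pi$ restricts to an isomorphism $V\cong \mL(Q)/\F_q$, the relation $\sim$ is exactly the orbit relation of the group $\{\bar\tau:\tau\in\Gal(E/K)\}$ acting on $\mL(Q)/\F_q$, and orbit relations are automatically equivalence relations. I would use this as the guiding idea, then record the explicit verifications.

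For reflexivity, given $z\in V$ I take $\tau=\mathrm{id}$ and note $z+z=2z=0\in\F_q$, so $z\sim z$. For symmetry, suppose $z_1\sim z_2$, say $z_1+\tau(z_2)=c\in\F_q$. Applying the automorphism $\tau^{-1}$, which is $\F_q$-linear and fixes $c$, yields $\tau^{-1}(z_1)+z_2=c\in\F_q$, i.e. $z_2+\tau^{-1}(z_1)\in\F_q$, so $z_2\sim z_1$ with witness $\tau^{-1}\in\Gal(E/K)$. (Note that symmetry needs only the group structure, not the characteristic.)

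For transitivity, assume $z_1+\tau(z_2)=c_1\in\F_q$ and $z_2+\tau'(z_3)=c_2\in\F_q$, and I would test the composite $\tau\tau'$. From the second relation $\tau'(z_3)=c_2-z_2$, so applying $\tau$ and using that it fixes $\F_q$ together with the first relation gives $\tau\tau'(z_3)=c_2-\tau(z_2)=c_2-(c_1-z_1)=z_1+(c_2-c_1)$. Hence $z_1+\tau\tau'(z_3)=2z_1+(c_2-c_1)$, which equals $c_2-c_1\in\F_q$ precisely because $2z_1=0$; thus $z_1\sim z_3$ with witness $\tau\tau'\in\Gal(E/K)$. The one genuinely delicate point, which I would flag as the main obstacle, is exactly this step: the naive composite of the two witnessing automorphisms leaves behind a spurious term $2z_1$, and it is the characteristic-$2$ hypothesis that annihilates it (equivalently, the sign $-1$ that would otherwise obstruct the orbit-relation reading is trivial here). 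I would also remark that these verifications use only that $\tau$ ranges over a group of $\F_q$-linear maps fixing $\F_q$ pointwise, so the membership $z_i\in V\subseteq\mL(Q)$ and the invariance $\tau(\mL(Q))=\mL(Q)$ are not strictly required for the equivalence property itself, though they are what make the clean orbit reformulation available.
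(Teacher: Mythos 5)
Your proposal is correct and follows essentially the same route as the paper: reflexivity via the identity automorphism and the characteristic-$2$ identity $z+z=0$, symmetry via the inverse $\tau^{-1}$ (which fixes $\F_q$ pointwise), and transitivity via the composite $\tau\tau'$ with the characteristic-$2$ cancellation of the leftover term. Your orbit-relation reformulation on $\mL(Q)/\F_q$ is a pleasant conceptual gloss, but the explicit verifications you record are exactly the paper's argument.
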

\begin{proof}
It is easy to verify the relation $\sim$ satisfies the axioms of an equivalence relation:
\begin{itemize}
\item[(1)] For any $z\in V$, $z+id(z)=z+z=0\in \F_q$. Hence, we have $z\sim z$.
\item[(2)]  If $z_1\sim z_2$, then there exists $\tau\in \Gal(E/K)$ such that $z_1+\tau(z_2)=\alpha\in \F_q$.
It is easy to see that $z_2+\tau^{-1}(z_1)=\tau^{-1}(z_1)+z_2=\tau^{-1}(\Ga)=\alpha\in \F_q$.
Hence, we have $z_2\sim z_1$.
\item[(3)] If $z_1\sim z_2$ and $z_2\sim z_3$, there exist $\tau_1$ and $\tau_2$ such that $z_1+\tau_1(z_2)\in \F_q$ and $z_2+\tau_2(z_3)\in \F_q$.
It follows that $z_1+\tau_1\tau_2(z_3)=z_1+\tau_1(z_2)+\tau_1(z_2+\tau_2(z_3))\in \F_q$. Hence, we have $z_1\sim z_3$.
 \end{itemize}
\end{proof}

\begin{lemma}\label{lem:3.3}
For any element $z\in V\setminus \{0\}$ and automorphism $\tau\in \Gal(E/K)\setminus \{id\}$, one has $z+\tau(z)\notin \F_q$.
\end{lemma}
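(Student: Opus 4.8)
The plan is to linearize the statement. Because $Q$ is fixed by every $\tau\in\Gal(E/K)$, Lemma~\ref{lem:2.0} gives $\tau(\mL(Q))=\mL(\tau(Q))=\mL(Q)$, so $\Gal(E/K)$ acts $\F_q$-linearly on $\mL(Q)$ and fixes the constants $\F_q$ pointwise; hence it acts on the two-dimensional quotient space $W:=\mL(Q)/\F_q$, to which $V$ maps isomorphically. Write $\bar z$ for the image of $z$ in $W$, so that a nonzero $z\in V$ has $\bar z\neq 0$. Since the characteristic is two, the condition $z+\tau(z)\in\F_q$ is equivalent to $\bar z+\tau(\bar z)=0$, that is, to $\tau(\bar z)=\bar z$. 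Thus the lemma is equivalent to the assertion that no non-identity element of $\Gal(E/K)$ fixes a nonzero vector of $W$; in other words, the induced action on $W$ is free on $W\setminus\{0\}$.

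The crux of the argument, and the step I expect to be the main obstacle, is to show that the generator $\s$ of $\Gal(E/K)$ acts on $W$ with order exactly $q+1$; this is also where characteristic two is indispensable. First I would check that $\mL(Q)$ generates $E$ over $\F_q$: any non-constant $u\in\mL(Q)$ has pole divisor $Q$ of degree two, so $[E:\F_q(u)]=2$, and a second $v\in\mL(Q)$ with $1,u,v$ linearly independent is neither an affine function of $u$ (that would contradict independence) nor a non-affine M\"obius transform of $u$ (its only pole sits at $Q$, where $u$ already has its simple pole), whence $\F_q(u)\neq\F_q(v)$ and their compositum must be $E$. Now suppose some $\s^{j}$ with $1\le j\le q$ acted trivially on $W$, i.e.\ $\s^{j}(w)-w\in\F_q$ for every $w\in\mL(Q)$; then $\s^{j}(u)=u+c_1$ and $\s^{j}(v)=v+c_2$ with $c_1,c_2\in\F_q$, and applying $\s^{j}$ a second time gives $\s^{2j}(u)=u+2c_1=u$ and $\s^{2j}(v)=v$ because $2=0$. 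Hence $\s^{2j}$ fixes the generators $u,v$, so $\s^{2j}=\mathrm{id}$ and the order of $\s^{j}$ divides $2$; but $(q+1)/\gcd(j,q+1)$ is odd, forcing $(q+1)\mid j$, a contradiction.

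With faithfulness in hand I would conclude by an eigenvalue computation, recalling that every $\tau\neq\mathrm{id}$ is $\s^{j}$ with $1\le j\le q$. As $q+1$ is odd, $\s$ is semisimple, and since $q+1\mid q^2-1$ it is diagonalizable over $\F_{q^2}$ with $(q+1)$-th roots of unity as eigenvalues. Its characteristic polynomial on $W$ has degree two over $\F_q$, so the eigenvalues either lie in $\F_q$ or form a conjugate pair $\mu,\mu^{q}\in\F_{q^2}\setminus\F_q$. The first case is excluded because $\gcd(q+1,q-1)=1$ forces the only $(q+1)$-th root of unity in $\F_q^{*}$ to be $1$, which would make $\s$ act trivially on $W$, contradicting the previous paragraph. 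In the conjugate case $\mu^{q}=\mu^{-1}$ has the same order as $\mu$, and this common order is the order $q+1$ of $\s$ on $W$, so $\mu$ is a primitive $(q+1)$-th root of unity. For $1\le j\le q$ neither $\mu^{j}$ nor $\mu^{qj}$ equals $1$ (using $\gcd(q,q+1)=1$ in the second), hence $\s^{j}-\mathrm{id}$ is invertible on $W$ and fixes no nonzero vector, which is exactly what the reformulation requires.
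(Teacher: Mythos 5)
Your proof is correct, but it takes a genuinely different route from the paper's. Both arguments hinge on the same characteristic-two trick, though packaged differently: you linearize at the outset ($z+\tau(z)\in\F_q$ iff $\tau$ fixes the image of $z$ in $W=\mL(Q)/\F_q$), while the paper applies $\tau$ to $z+\tau(z)=\alpha$ and adds, obtaining $z+\tau^2(z)=0$. From that point the paper stays entirely inside valuation theory: since $q+1$ is odd and $\tau\neq \mathrm{id}$, also $\tau^2\neq \mathrm{id}$, so $z$ lies in the proper fixed field $E^H$ with $H=\langle\tau^2\rangle$; because $Q$ is totally ramified in $E/E^H$, one gets $\nu_Q(z)=e(Q|R)\,\nu_R(z)=|H|\cdot\nu_R(z)\le -2$, where $R=Q\cap E^H$, contradicting $\nu_Q(z)=-1$ for a nonconstant element of $\mL(Q)$. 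That argument is a few lines long and needs nothing beyond Proposition~\ref{prop:3.1} and multiplicativity of ramification indices in towers. Your argument instead establishes the stronger, purely linear-algebraic fact that the two-dimensional representation of the cyclic group $\Gal(E/K)$ on $W$ is fixed-point free on $W\setminus\{0\}$: first faithfulness, via the compositum claim $E=\F_q(u,v)$ (a step the paper never needs, and the one place where your write-up is terse --- the point is that a non-affine M\"obius transform of $u$ has its poles at zeros of $cu+d$, which are places disjoint from $Q$, so it cannot lie in $\mL(Q)$), and then diagonalization over $\F_{q^2}$, using $\gcd(q+1,q-1)=1$ to rule out eigenvalues in $\F_q$ and concluding that $\s$ acts with a primitive $(q+1)$-st root of unity as eigenvalue. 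The extra work buys more than the lemma: since $z_1\sim z_2$ exactly when $\bar z_1$ and $\bar z_2$ lie in the same $\Gal(E/K)$-orbit of $W$, a free action on $W\setminus\{0\}$ says at once that every nonzero class has exactly $q+1$ elements and that there are $(q^2-1)/(q+1)=q-1$ classes; that is, your argument re-proves Proposition~\ref{prop:3.5} in one stroke, whereas the paper assembles it separately from Lemmas~\ref{lem:3.2}--\ref{lem:3.4}.
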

\begin{proof}
Suppose that there exist $z\in V\setminus \{0\}$ and $\tau\in \Gal(E/K)\setminus \{id\}$ such that $z+\tau(z)=\alpha\in \F_q$, then we have
$\tau(z)+\tau^2(z)=\tau(\alpha)=\Ga\in \F_q$. It follows that $z+\tau^2(z)=0$, i.e., $\tau^2(z)=z$. Let $H$ be the subgroup of $\Gal(E/K)$ generated by $\tau^2$. Then we have $z\in E^H$. Since $\tau\neq id$ and $|\Gal(E/K)|=q+1$ is odd, we have $\tau^2\neq id$. Hence, the degree of Galois extension $E/E^H$ is $[E:E^H]=|H|\ge 2$.
Let $R$ be the restriction of the place $Q\in \mathbb{P}_E$ to $E^H$. Then $Q|R$ is totally ramified in $E/E^H$ from Proposition \ref{prop:3.1}.
Since $z\in V\setminus \{0\}\subseteq \mL(Q)$, we have $-1=\nu_Q(z)=e(Q|R) \nu_R(z)=|H|\cdot \nu_R(z)<0$. Hence, we obtain $\nu_Q(z)=|H|\cdot \nu_R(z)\le -2$, i.e.,  $z\notin \mL(Q)$ which is a contradiction.
\end{proof}

For any element $z\in V$, let $[z]$ denote the equivalence class $\{x\in V: x\sim z\}$ containing $z$.
\begin{lemma}\label{lem:3.4}
For any element $z\in V$, the cardinality of each equivalence class $[z]$ is at most $q+1$.
\end{lemma}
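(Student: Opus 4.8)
The plan is to realize each equivalence class $[z]$ as the image of the finite group $\Gal(E/K)$ under a single map, so that its size is automatically bounded by $|\Gal(E/K)|=q+1$. The key observation is that membership in $[z]$ is governed entirely by the $V$-component of the Galois translates of $z$.

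First I would record that $\mL(Q)$ is stable under the whole Galois group. Since $Q$ is totally ramified in $E/K$, every $\tau\in\Gal(E/K)$ fixes $Q$, and hence $\tau(\mL(Q))=\mL(\tau(Q))=\mL(Q)$ by the general identity $\tau(\mL(D))=\mL(\tau(D))$, a consequence of Lemma~\ref{lem:2.0}. In particular $\tau(z)\in\mL(Q)=\F_q\oplus V$ for every $z\in V$ and every $\tau$. Let $\pi\colon\mL(Q)\to V$ denote the projection onto $V$ along the constant subspace $\F_q$, so that $\pi(c+v)=v$ for $c\in\F_q$ and $v\in V$.

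Next I would rewrite the defining condition of $\sim$ in terms of $\pi$. Fix $z\in V$ and take any $w\in V$. For $\tau\in\Gal(E/K)$ write $\tau(z)=c_\tau+\pi(\tau(z))$ with $c_\tau\in\F_q$. Since $\tau$ fixes constants, $w+\tau(z)\in\F_q$ holds if and only if $w+\pi(\tau(z))\in\F_q$; but $w+\pi(\tau(z))\in V$ and $V\cap\F_q=\{0\}$, so this is equivalent to $w+\pi(\tau(z))=0$, that is, as $\mathrm{char}\,\F_q=2$, to $w=\pi(\tau(z))$. Therefore $w\sim z$ if and only if $w=\pi(\tau(z))$ for some $\tau\in\Gal(E/K)$, which shows
\[ [z]=\{\pi(\tau(z)):\tau\in\Gal(E/K)\}.\]

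Finally, $[z]$ is the image of the map $\Gal(E/K)\to V$, $\tau\mapsto\pi(\tau(z))$, whose domain has cardinality $q+1$ by Proposition~\ref{prop:3.1-2} and the identification $\Gal(E/K)\cong\Gal(F/K)/G$; hence $|[z]|\le q+1$, as claimed. The only point requiring care is the bookkeeping in the direct-sum decomposition $\mL(Q)=\F_q\oplus V$: one must use both that $\tau$ fixes $\F_q$ and that $V$ meets $\F_q$ trivially in order to collapse the condition $w+\tau(z)\in\F_q$ down to an identity of $V$-components. No genus or ramification estimates beyond the $\Gal(E/K)$-invariance of $\mL(Q)$ are needed, so I do not expect a serious obstacle here.
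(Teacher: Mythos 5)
Your proof is correct, but it is the mirror image of the paper's argument rather than a reproduction of it. The paper bounds $|[z]|$ by an \emph{injection} of the class into the Galois group: for each $w\in[z]$ it picks a witness $\tau_w$ with $z+\tau_w(w)\in\F_q$ and shows distinct elements cannot share a witness, since $\tau(z_1+z_2)\in\F_q$ would force $z_1+z_2\in V\cap\F_q=\{0\}$. You instead bound it by a \emph{surjection} from the Galois group onto the class, parametrizing $[z]$ as the image of $\tau\mapsto\pi(\tau(z))$ (up to replacing $\tau$ by $\tau^{-1}$, the two directions are interchangeable by the symmetry in Lemma~\ref{lem:3.2}). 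The underlying mechanism is identical — for a fixed $\tau$ there is at most one $w\in V$ with $w+\tau(z)\in\F_q$, because $V\cap\F_q=\{0\}$ — so neither route is deeper, but the dependencies differ slightly: the paper's injection needs nothing beyond the fact that the automorphisms fix $\F_q$ pointwise, whereas your surjection additionally invokes $\tau(Q)=Q$ and $\tau(\mL(Q))=\mL(Q)$ so that $\pi$ can be applied to every translate $\tau(z)$ (strictly speaking, for the containment $[z]\subseteq\{\pi(\tau(z)):\tau\in\Gal(E/K)\}$ this stability is automatic, since $w+\tau(z)\in\F_q$ already forces $\tau(z)\in w+\F_q\subseteq\mL(Q)$). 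What you buy in return is stronger than the lemma: your exact description of $[z]$ as $\{\pi(\tau(z)):\tau\in\Gal(E/K)\}$ is precisely the construction $w_\tau=\tau(z)+\alpha$ that the paper introduces separately in the proof of Proposition~\ref{prop:3.5}, so your argument front-loads the surjectivity half of that proposition; combined with Lemma~\ref{lem:3.3} it would immediately yield $|[z]|=q+1$ for every $z\neq 0$.
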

\begin{proof}
Let $z_1$ and $z_2$ be two distinct elements in an equivalence class $[z]$. There exist automorphisms $\tau_1,\tau_2\in \Gal(E/K)$ such that
$z+\tau_1(z_1)\in \F_q$ and $z+\tau_2(z_2)\in \F_q$ from Lemma \ref{lem:3.2}. We claim that $\tau_1\neq \tau_2$. It follows that the cardinality of each equivalence class $[z]$ is at most $q+1$, since the size of $\Gal(E/K)$ is $q+1$.

Suppose that $\tau_1=\tau_2$. Then we have $z+\tau_1(z_1)+z+\tau_2(z_2)=\tau_1(z_1)+\tau_2(z_2)=\tau_1(z_1+z_2)\in \F_q$.
It follows that $z_1+z_2\in \F_q$. That is to say $z_1+z_2\in V\cap \F_q=\{0\}$. Hence, we obtain $z_1=z_2$ which is a contradiction.
\end{proof}

\begin{prop}\label{prop:3.5}
For any element $z\in V\setminus \{0\}$, there are exactly $q+1$ elements in each equivalence class $[z]$.
Moreover, there are exactly $q-1$ equivalence classes in $V\setminus \{0\}$.
\end{prop}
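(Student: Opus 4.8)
The plan is to prove that every equivalence class inside $V\setminus\{0\}$ attains the maximal size $q+1$ allowed by Lemma~\ref{lem:3.4}, after which the number of classes drops out by a single division. Before counting, I would first pin down the two numerical facts I need: since $\mL(Q)=\F_q\oplus V$ with $\dim_{\F_q}V=2$, we have $|V\setminus\{0\}|=q^2-1=(q+1)(q-1)$; and $\{0\}$ is an equivalence class on its own, so that $V\setminus\{0\}$ really is a union of classes. The latter is immediate: if $0\sim z$ then $\tau(z)\in\F_q$ for some $\tau\in\Gal(E/K)$, and applying $\tau^{-1}$ (which fixes $\F_q$ pointwise) forces $z\in\F_q\cap V=\{0\}$.

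The key idea is to realize $[z]$ as the orbit of $z$ under the action of $\Gal(E/K)$ on the quotient $\mL(Q)/\F_q$. Since $Q$ is fixed by every $\tau\in\Gal(E/K)$, each $\tau$ satisfies $\tau(\mL(Q))=\mL(\tau(Q))=\mL(Q)$ and fixes $\F_q$; writing the direct-sum decomposition I would set $\tau(z)=\alpha_\tau+v_\tau$ with $\alpha_\tau\in\F_q$ and $v_\tau\in V$. For $w\in V$ the membership condition $w+\tau(z)\in\F_q$ then becomes $(w+v_\tau)+\alpha_\tau\in\F_q$, i.e.\ $w+v_\tau\in V\cap\F_q=\{0\}$, so $w=v_\tau$. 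This shows
\[[z]=\{\,v_\tau:\tau\in\Gal(E/K)\,\},\]
the set of $V$-components of the $q+1$ translates $\tau(z)$.

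What remains, and the step I expect to be the real content, is to show these $q+1$ components are pairwise distinct when $z\neq0$; this is precisely where Lemma~\ref{lem:3.3} enters. If $v_{\tau_1}=v_{\tau_2}$ with $\tau_1\neq\tau_2$, then $\tau_1(z)+\tau_2(z)=\alpha_{\tau_1}+\alpha_{\tau_2}\in\F_q$, and applying $\tau_1^{-1}$ gives $z+\tau(z)\in\F_q$ for $\tau:=\tau_1^{-1}\tau_2\neq id$, contradicting Lemma~\ref{lem:3.3}. Hence $|[z]|=q+1$ for every $z\in V\setminus\{0\}$. Finally, since these size-$(q+1)$ classes partition the $(q+1)(q-1)$ elements of $V\setminus\{0\}$, there are exactly $(q^2-1)/(q+1)=q-1$ of them, as claimed. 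Phrased more conceptually, the whole statement is just the orbit-counting consequence of the fact that $\Gal(E/K)$, a group of order $q+1$, acts \emph{freely} on the nonzero vectors of $\mL(Q)/\F_q$, freeness being exactly Lemma~\ref{lem:3.3}.
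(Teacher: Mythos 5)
Your proof is correct and follows essentially the same route as the paper's: both realize $[z]$ as the set of $V$-components of the translates $\tau(z)$ for $\tau\in\Gal(E/K)$, prove these components are pairwise distinct by reducing $v_{\tau_1}=v_{\tau_2}$ to a violation of Lemma~\ref{lem:3.3}, and then divide $|V\setminus\{0\}|=q^2-1$ by the class size $q+1$. The only (minor) refinements on your side are that you explicitly check $\{0\}$ is a class by itself and that you obtain the upper bound $|[z]|\le q+1$ directly from your membership computation rather than quoting Lemma~\ref{lem:3.4}.
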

\begin{proof}
Let $z\in V\setminus \{0\}$. We claim that the size of the set $\{w\in V: w\sim z\}$ is $q+1$. For each automorphism $\tau\in \Gal(E/K)$, we have $\tau(z)\in \tau(\mL(Q))=\mL(\tau(Q))=\mL(Q)$ from Lemma \ref{lem:2.0}, i.e., there exists an element $\alpha\in \F_q$ such that $\tau(z)+\alpha\in V\setminus \{0\}$. Let $w_\tau=\tau(z)+\alpha$. Then we have $w_\tau+\tau(z)=\alpha\in \F_q$, i.e., $w_\tau\sim z$.

From Lemma \ref{lem:3.3}, $w_\tau$ are pairwise distinct for different $\tau \in \Gal(E/K)$. In particular,
let $\tau_1\neq \tau_2\in \Gal(E/K)$. Then there exist $w_{\tau_1}, w_{\tau_2}\in V$ such that $w_{\tau_1}+\tau_1(z)=\alpha_1\in \F_q$ and $w_{\tau_2}+\tau_2(z)=\alpha_2\in \F_q$. Suppose that $w_{\tau_1}= w_{\tau_2}$. Then we have $\tau_1(z)+\tau_2(z)=\alpha_1+\alpha_2\in \F_q$, i.e., $z+\tau_1^{-1}\tau_2(z)\in \F_q$ which contradicts Lemma \ref{lem:3.3}. The size of $V\setminus \{0\}$ is $q^2-1$, then there are exactly $q-1$ equivalence classes in $V\setminus \{0\}$.
This completes the proof.
\end{proof}

From Proposition \ref{prop:3.5}, let $[0], [z_1],[z_2],\cdots,[z_{q-1}]$ be all the pairwise distinct equivalence classes of $V$.
Let Tr be the absolute trace map from $\F_{q}$ to $\F_2$ defined by $\text{Tr}(\Ga)=\Ga+\Ga^2+\cdots+\Ga^{2^{n-1}}$ for any $\Ga\in \F_q$.
Now we choose one element from each equivalence class $[z_i]$ for $1\le i\le q-1$ to generate  a binary sequence $\bs_i$ as follows:
$$\bs_i=(s_{i,0},s_{i,1},\cdots,s_{i,q}) \text{ with } s_{i,j}=(-1)^{\text{Tr}(z_i(P_j))} \text{ with } 0\le j\le q.$$
It is clear that the length of each sequence $\bs_i$ is $q+1$ and the family of sequences $\mS=\{\bs_i: 1\le i\le q-1\}$ has size $q-1$ since these $q-1$ equivalence classes are pairwise distinct. Now we are going to show that the family of sequences  $\mS$ has low correlation.

\begin{theorem}\label{thm:3.6}
If $q=2^n$ is a prime power, then there exists a family of binary sequences $\mS=\{\bs_i: 1\le i\le q-1\}$ with correlation upper bounded by $$\text{Cor}(\mS)\le \lfloor 2\sqrt{q}\rfloor-\Gd_q,$$
where $\Gd_q$ is given by \[\Gd_q=\left\{\begin{array}{ll}0&\mbox{if $ \lfloor 2\sqrt{q}\rfloor$ is odd,}\\
1&\mbox{if $\lfloor 2\sqrt{q}\rfloor$ is even.}
\end{array}\right.\]
\end{theorem}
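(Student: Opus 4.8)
The plan is to turn every correlation value of $\mS$ into the number of rational places of a genus-one Artin--Schreier curve over $E$, and then to apply the Serre bound sharpened by a parity remark. First I would record that each $z\in V\subseteq\mL(Q)$ is regular at the rational places $P_0,\dots,P_q$ (these lie over $\infty$, whereas $Q$ lies over $p(x)$), so the evaluations $z(P_j)\in\F_q$ make sense. Fixing indices $i,k$ and a delay $t$, I would push $t$ into the automorphism: since $P_{j+t}=\sigma^{j+t}(P_0)$ and $\nu_{P_{j+t}}(z_k)\ge 0$, Lemma~\ref{lem:2.0}(3) applied to $\sigma^{-t}$ gives $z_k(P_{j+t})=\sigma^{-t}(z_k)(\sigma^{-t}(P_{j+t}))=\sigma^{-t}(z_k)(P_j)$. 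Writing $w:=z_i+\sigma^{-t}(z_k)$ and using additivity of $\Tr$,
\[
s_{i,j}\,s_{k,j+t}=(-1)^{\Tr(z_i(P_j))+\Tr(z_k(P_{j+t}))}=(-1)^{\Tr(w(P_j))}.
\]
Hence the autocorrelation $A_t(\bs_i)$ (take $k=i$, $1\le t\le q$) and the cross-correlation $C_t(\bs_i,\bs_k)$ (take $i\ne k$, $0\le t\le q$) both equal $\left|\sum_{j=0}^{q}(-1)^{\Tr(w(P_j))}\right|$.

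Next I would determine the poles of $w$. The decisive input, and the reason the equivalence relation of Lemma~\ref{lem:3.2} was introduced, is that $w\notin\F_q$ in every case that can occur. For the autocorrelation $w=z_i+\sigma^{-t}(z_i)$ with $\sigma^{-t}\ne id$ (as $1\le t\le q<q+1$ and $\sigma$ has order $q+1$), so $w\notin\F_q$ by Lemma~\ref{lem:3.3}; for the cross-correlation the distinct classes $[z_i]\ne[z_k]$ give $z_i\not\sim z_k$, which by the definition of $\sim$ means $z_i+\sigma^{-t}(z_k)\notin\F_q$ for all $t$. Because $w\in\mL(Q)$ can only have a pole at $Q$ and a pole-free function on the rational field $E$ is constant, $w\notin\F_q$ forces $\nu_Q(w)=-1$, a simple pole at the degree-two place $Q$. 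In particular $w\ne z^2+z$ for every $z\in E$, since $z^2+z$ has even pole order at $Q$ while $w$ has a simple pole.

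I would then set $H=E(y)$ with $y^2+y=w$. By Proposition~\ref{prop:2.3} we have $m_Q=1$ (the simple pole at $Q$ is of odd order and cannot be lowered by subtracting any $z^2+z$) and $m_P=-1$ for every other place $P$, so $H$ has full constant field $\F_q$ and genus
\[
g(H)=\tfrac12\Big(-2+(m_Q+1)\deg(Q)\Big)=\tfrac12(-2+2\cdot2)=1.
\]
Each $P_j$ is unramified in $H/E$ by Proposition~\ref{prop:2.3}(b), and over it lie two rational places if $\Tr(w(P_j))=0$ (equivalently $y^2+y=w(P_j)$ is solvable in $\F_q$) and none otherwise; that is, $1+(-1)^{\Tr(w(P_j))}$ rational places. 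Since any rational place of $H$ lies over a rational place of $E$, and $P_0,\dots,P_q$ are all of them ($E$ is rational and $Q$ has degree $2$), the total number of rational places of $H$ is $N(H)=\sum_{j=0}^q\big(1+(-1)^{\Tr(w(P_j))}\big)$. Consequently $\sum_{j=0}^q(-1)^{\Tr(w(P_j))}=N(H)-(q+1)$.

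Finally I would invoke the Serre bound and a parity count. As $g(H)=1$, the Serre bound gives $|N(H)-(q+1)|\le\lfloor 2\sqrt q\rfloor$, which already proves the theorem when $\lfloor 2\sqrt q\rfloor$ is odd, i.e. $\Gd_q=0$. When $\lfloor 2\sqrt q\rfloor$ is even I would gain the extra unit by parity: each summand $1+(-1)^{\Tr(w(P_j))}$ is $0$ or $2$, so $N(H)$ is even, whereas $q+1$ is odd; thus $N(H)-(q+1)$ is an odd integer and cannot equal the even number $\lfloor 2\sqrt q\rfloor$, giving $|N(H)-(q+1)|\le\lfloor 2\sqrt q\rfloor-1$. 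Together these bounds are exactly $\lfloor 2\sqrt q\rfloor-\Gd_q$, and taking the maximum over all $i,k,t$ yields $\mathrm{Cor}(\mS)\le\lfloor 2\sqrt q\rfloor-\Gd_q$. I expect the only genuinely delicate point to be the uniform statement $w\notin\F_q$ across all delays and both correlation types; this is precisely what Lemmas~\ref{lem:3.2}--\ref{lem:3.3} and Proposition~\ref{prop:3.5} are engineered to supply. Once that is in hand, the genus-one computation and the passage from the character sum to $N(H)$ are routine.
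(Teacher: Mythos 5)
Your proposal is correct and follows essentially the same route as the paper's proof: you convert each correlation value (auto- and cross-) into the rational point count $N(H)-(q+1)$ of the genus-one Artin--Schreier curve $y^2+y=z_i+\sigma^{-t}(z_k)$ over $E$, use Lemma~\ref{lem:3.3} and the definition of $\sim$ to rule out constant defining functions, and finish with the Serre bound sharpened by the parity of $N(H)$ against the odd number $q+1$. The only cosmetic differences are that you handle both correlation types in one unified computation and obtain the genus directly from Proposition~\ref{prop:2.3}(d) rather than via the Hurwitz genus formula, which are equivalent to the paper's steps.
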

\begin{proof}
Firstly, let us compute the autocorrelation of the family of sequences. For each $1\le i\le q-1$, the autocorrelation of $\bs_i$ at delay $t$ with $1\le t\le q$ is given by
\begin{align*}
A_t(s_i)&=\left | \sum_{j=0}^{q} s_{i,j}s_{i,j+t}\right |=\left | \sum_{j=0}^q (-1)^{\text{Tr}(z_i(P_j))+\text{Tr}(z_i(P_{j+t}))}\right |\\ &=\left | \sum_{j=0}^q (-1)^{\text{Tr}(z_i(P_j))+\text{Tr}((\sigma^{-t}(z_i))(P_j))}\right |=\left | \sum_{j=0}^q (-1)^{\text{Tr}((z_i+\sigma^{-t}(z_i))(P_j))}\right |.
\end{align*}
The third equality follows from Lemma \ref{lem:2.0}. 
Since $z_i\in V\setminus \{0\}$ and $1\le t\le q$, we have $z_i+\sigma^{-t}(z_i)\notin \F_q$ from Lemma \ref{lem:3.3}. 
Since $\sigma(Q)=Q$, we have $z_i+\sigma^{-t}(z_i)\in \mL(Q)$ from Lemma \ref{lem:2.0}. It follows that $Q$ is the unique pole of $z_i+\sigma^{-t}(z_i)$ and $\nu_Q(z_i+\sigma^{-t}(z_i))=-1$. 
Consider the Artin-Schreier curve $E_i$ over $E$ defined by $$y^2+y=z_i+\sigma^{-t}(z_i).$$
From Proposition \ref{prop:2.3}, the place $Q$ is the unique ramified place in $E_i/E$. 
From the Hurwitz genus formula, we have $$2g(E_i)-2=2[2g(E)-2]+2\deg(Q).$$
Hence, the genus of $E_i$ is one for each $1\le i\le q-1$.
Let $N_0$ denote the number of the set $S_0=\{0\le j\le q: \text{Tr}((z_i+\sigma^{-t}(z_i))(P_{j}))=0\}$.
Let $N_1$ denote the number of the set $S_1=\{0\le j\le q: \text{Tr}((z_i+\sigma^{-t}(z_i))(P_{j}))=1\}$.
It is clear that $$N_0+N_1=q+1.$$
If $j\in S_0$, then $P_j$ splits into two rational places in $E_i$ from \cite[Theorem 2.25]{LN83} and \cite[Theorem 3.3.7]{St09}.
If $j\in S_1$, then $P_j$ is inert in $E_i$ and the place of $E_i$ lying over $P_j$ has degree two. 
Let $N(E_i)$ be the number of rational points of the Artin-Schreier curve $E_i$. Then the number of rational places of $E_i$ is 
$$N(E_i)=2N_0.$$
From the Serre bound, we have $|N(E_i)-q-1|\le \lfloor 2\sqrt{q}\rfloor.$  In particular, as $N(E_i)$ is even, we know $|N(E_i)-q-1|$ is odd and
$$|N(E_i)-q-1|\le \lfloor 2\sqrt{q}\rfloor-\Gd_q.$$
Hence, we obtain $$A_t(\bs_i)=|N_0-N_1|=|2N_0-q-1|=|N(E_i)-q-1|\le \lfloor 2\sqrt{q}\rfloor-\Gd_q.$$

Now let us consider the cross-correlation of the family of sequences. For two distinct sequences $\bs_i$ and $\bs_j$ in $\mS$, the cross-correlation of $\bs_i$ and $\bs_j$ with $1\le i\neq j\le q-1$ at delay $t$ ($0\le t\le q$) is given by
$$C_t(\bs_i,\bs_j)=\left|\sum_{k=0}^q(-1)^{\text{Tr}(z_i(P_k))+\text{Tr}(z_j(P_{k+t}))}\right|=\left|\sum_{k=0}^q(-1)^{\text{Tr}((z_i+\sigma^{-t}(z_j))(P_{k}))}\right|.$$
Since $[z_i]$ and $[z_j]$ are distinct equivalence classes, we have $z_i+\sigma^{-t}(z_j)\notin \F_q$ from Lemma \ref{lem:3.2}. 
Since $\sigma(Q)=Q$, we have $z_i+\sigma^{-t}(z_j)\in \mL(Q)$ and $\nu_Q(z_i+\sigma^{-t}(z_j))=-1$ from Lemma \ref{lem:2.0}.
Consider the Artin-Schreier curve $E_{i,j}$ defined by $$y^2+y=z_i+\sigma^{-t}(z_j).$$
From Proposition \ref{prop:2.3}, the place $Q$ is the unique ramified place in $E_i/E$ with different exponent two and the genus of $E_{i,j}$ is one.
Let $N_0^\prime$ denote the number of the set $\{0\le k\le q: \text{Tr}((z_i+\sigma^{-t}(z_j))(P_k))=0\}$.
Let $N_1^\prime$ denote the number of the set $\{0\le k\le q: \text{Tr}((z_i+\sigma^{-t}(z_j))(P_k))=1\}$.
It is clear that $$N_0^\prime+N_1^\prime=q+1.$$
Let $N(E_{i,j})$ be the number of rational points on the Artin-Schreier curve $E_{i,j}$. Similarly, we have
$$N(E_{i,j})=2N_0^\prime.$$
Again, as $N(E_{i,j})$ is even, we have   $$|N(E_{i,j})-q-1|\le \lfloor 2\sqrt{q}\rfloor-\Gd_q$$
from the Serre bound.
Hence, we obtain $$C_t(\bs_i,\bs_j)=|N_0^\prime-N_1^\prime|=|2N_0^\prime -q-1|=|N(E_{i,j})-q-1|\le \lfloor 2\sqrt{q}\rfloor-\Gd_q.$$
The proof is completed.
\end{proof}

\section{An explicit construction of sequences with low correlation}\label{sec:4}
The pervious section provides a general theoretical construction of binary sequences with low correlation via cyclotomic function fields. In fact, the sequences constructed in the above section can be realized explicitly.  In this section, we provide an explicit construction of a family of binary sequences by determining explicit automorphisms and present some numerical results.

\subsection{An explicit construction}\label{sec:4.1}
Let $p(x)=x^2+ax+b$ be a primitive irreducible polynomial in $\mathbb{F}_q[x]$. From \cite[Lemma 3.17]{LN83}, the constant $b$ should be a primitive element of $\F_q$.
Let $K$ be the rational function field $\F_q(x)$ over $\F_q$.
The cyclotomic function field $K(\Lambda_{p(x)})$ is explicitly given by $F=K(\Gl)$ with
$$\Gl^{q^2-1} +(x^q+x+a)\Gl^{q-1} +x^2+ax+b = 0.$$
Let $u=\Gl^x/x=x+\Gl^{q-1}$. Then we have $x=u+\Gl^{q-1}$ and $$ \Gl^{q^2-1}+[(u+\Gl^{q-1})^q+(u+\Gl^{q-1})+a]\Gl^{q-1}+(u+\Gl^{q-1})^2+a(u+\Gl^{q-1})+b=0.$$
It follows that 
\begin{equation}\label{eq:4.1}
\Gl^{q-1}=\frac{u^2+au+b}{u^q+u}.
\end{equation}
 Since $\Gl^{q-1}=u+x$, we have $K(\Gl)=\F_q(x,\Gl)=\F_q(u,\Gl)$.

\begin{prop}\label{prop:4.1}
Let $\eta$ be the generator of $\Gal(K(\Gl)/K)$ determined by $\eta(\Gl)=\Gl^x$ and $\tau=\eta^{q+1}$. Then we have $\langle \tau \rangle\cong \F_q^*$ and
the fixed subfield of $F$ with respect to $\langle \tau \rangle$ is $\F_q(u)$.
\end{prop}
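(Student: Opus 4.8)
The plan is to prove the two assertions in turn --- that $\tau$ has order $q-1$ with $\langle\tau\rangle\cong\F_q^*$, and that $F^{\langle\tau\rangle}=\F_q(u)$ --- and both reduce to first pinning down how $\tau$ acts on the generator $\lambda$. Under the isomorphism $\Gal(F/K)\cong(\F_q[x]/(p(x)))^*$ of Proposition~\ref{prop:3.1}(vi), the generator $\eta=\sigma_x$ corresponds to $\overline{x}$, so $\tau=\eta^{q+1}$ corresponds to $\overline{x}^{\,q+1}$ and acts by $\tau(\lambda)=\lambda^{x^{q+1}}$, where the Carlitz action depends only on the residue $x^{q+1}\bmod p(x)$. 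The key computation I would carry out is that $x^{q+1}\equiv b\pmod{p(x)}$: the two roots of $p$ in $\F_{q^2}$ are the Frobenius conjugates $\overline{x}$ and $\overline{x}^{\,q}$, whose product is the constant term $b$, so $\overline{x}^{\,q+1}=b\in\F_q^*$. Since the ring homomorphism $f\mapsto f(\varphi)$ sends a constant $b\in\F_q$ to the scalar endomorphism $z\mapsto bz$, this yields $\tau(\lambda)=\lambda^{b}=b\lambda$.

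With this in hand the first claim is immediate. As $p$ is primitive, $b=\overline{x}^{\,q+1}$ is the norm of a generator of $\F_{q^2}^*$, hence a primitive element of $\F_q$, so $\tau^i(\lambda)=b^i\lambda$ equals $\lambda$ exactly when $(q-1)\mid i$; thus $\tau$ has order $q-1$, and $\tau^i\mapsto b^i$ is an isomorphism $\langle\tau\rangle\cong\F_q^*$. For the inclusion $\F_q(u)\subseteq F^{\langle\tau\rangle}$ I would simply compute, using $b^{q-1}=1$ and that $\tau$ fixes $K$,
\[
\tau(u)=\tau(x+\lambda^{q-1})=x+(b\lambda)^{q-1}=x+b^{q-1}\lambda^{q-1}=x+\lambda^{q-1}=u .
\]

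Finally, to upgrade this inclusion to an equality I would count degrees. By Galois theory $[F:F^{\langle\tau\rangle}]=|\langle\tau\rangle|=q-1$. On the other hand, the relation \eqref{eq:4.1} shows $\lambda^{q-1}\in\F_q(u)$, and since $F=\F_q(x,\lambda)=\F_q(u,\lambda)$ (because $x=u+\lambda^{q-1}$), the element $\lambda$ is a root of $T^{q-1}-\lambda^{q-1}\in\F_q(u)[T]$, giving $[F:\F_q(u)]\le q-1$. Combining with the chain $\F_q(u)\subseteq F^{\langle\tau\rangle}\subseteq F$ forces $\F_q(u)=F^{\langle\tau\rangle}$. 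The main obstacle is the opening step: correctly identifying $\tau(\lambda)=b\lambda$, which hinges on the two facts that $x^{q+1}\equiv b\pmod{p(x)}$ (the norm of the primitive root) and that the Carlitz action of a constant is ordinary scalar multiplication; once these are secured, everything else is a short degree count.
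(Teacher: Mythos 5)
Your proof is correct, and it follows the same overall skeleton as the paper's (compute $\tau(\Gl)=b\Gl$, deduce the order of $\tau$, check $\tau(u)=u$, then count degrees), but the two key sub-steps are carried out by genuinely different arguments. First, for the identity $x^{q+1}\equiv b \pmod{p(x)}$, the paper argues via ramification: since the zero of $p(x)$ is totally ramified in $F/K$, the defining equation of $\Gl$ is Eisenstein at $p(x)$, which forces $p(x)\mid x^q+x+a$, i.e.\ $x^q\equiv x+a \pmod{p(x)}$, and then $x^{q+1}\equiv x(x+a)=x^2+ax\equiv b$. You instead identify $\F_q[x]/(p(x))$ with $\F_{q^2}$ and observe that $\overline{x}^{\,q+1}$ is the product of the two Frobenius-conjugate roots of $p$, i.e.\ the norm of $\overline{x}$, which equals the constant term $b$; this is more elementary and gives you primitivity of $b$ for free (norm of a generator of $\F_{q^2}^*$), where the paper instead quotes \cite[Lemma 3.17]{LN83}. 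One trade-off: the paper's intermediate congruence $x^q\equiv x+a\pmod{p(x)}$ is reused later in the proof of Proposition \ref{prop:4.3}, so its Eisenstein detour buys information your shortcut does not produce. Second, for the equality $F^{\langle\tau\rangle}=\F_q(u)$, the paper invokes Kummer theory (Proposition \ref{prop:2.2}) to assert $[F:\F_q(u)]=q-1$ exactly --- which strictly speaking requires verifying that the right-hand side of \eqref{eq:4.1} is not a $d$-th power in $\F_q(u)$ for any $d>1$ dividing $q-1$, a hypothesis the paper does not check. Your argument needs only the upper bound $[F:\F_q(u)]\le q-1$, obtained from the polynomial $T^{q-1}-\Gl^{q-1}\in\F_q(u)[T]$ with $\Gl^{q-1}\in\F_q(u)$ by \eqref{eq:4.1}; combined with $[F:F^{\langle\tau\rangle}]=q-1$ and the inclusion $\F_q(u)\subseteq F^{\langle\tau\rangle}$, the tower is pinched and equality follows. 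This is cleaner than the paper's route and sidesteps the unverified Kummer hypothesis entirely.
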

\begin{proof}
Since the zero of $p(x)$ is totally ramified in $K(\Gl)/K$ and $\Gl$ satisfies the following equation
$$\Gl^{q^2-1} +(x^q+x+a)\Gl^{q-1} +x^2+ax+b = 0.$$
From Eisenstein's irreducibility criterion \cite[Proposition 3.1.15]{St09}, we have $x^2+ax+b$ divides $x^q+x+a$, i.e., $$x^q\equiv x+a (\text{mod } x^2+ax+b).$$
Let $\tau=\eta^{q+1}$.
It is easy to verify that $\tau(\Gl)=\eta^{q+1}(\lambda)=\lambda^{x^{q+1}}=\lambda^{x(x+a)}=\Gl^b=b\Gl$.
The third equality follows from  $x^{q+1}=x\cdot x^q\equiv x(x+a)  (\text{mod } x^2+ax+b)$ and $\Gl^{x^2+ax+b}=0$. Hence, we have $\langle \tau \rangle\cong \F_q^*$ from Proposition \ref{prop:3.1}.

It is clear that $\tau(\Gl^{q-1})=(\tau(\Gl))^{q-1}=(b\Gl)^{q-1}=\Gl^{q-1}$. Hence, we have $$\F_q(u)=\F_q(x,u)=\F_q(x,x+\Gl^{q-1})=\F_q(x,\Gl^{q-1})\subseteq F^{\langle \tau\rangle}.$$
From Galois theory, we have $[F:F^{\langle \tau \rangle}]=|\langle \tau \rangle|=q-1$.
From Equation \eqref{eq:4.1}, 
 $K(\Gl)/\F_q(u)$ can be viewed as a Kummer extension $\F_q(u,\Gl)$ over $\F_q(u)$.
Thus, we have $[F:\F_q(u)]=q-1$ from Proposition \ref{prop:2.2}.
Hence, we have $F^{\langle \tau \rangle}=\F_q(u)$.
\end{proof}

\begin{prop}\label{prop:4.2}
Let $E=\F_q(u)$ and $Q$ be the place of $E$ lying over $p(x)$. Then the place $Q\in \PP_E$ corresponds to the quadratic irreducible polynomial $p(u)=u^2+au+b$.
\end{prop}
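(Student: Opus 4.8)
The plan is to make the field containment $K=\F_q(x)\subseteq \F_q(u)=E$ completely explicit and then to exhibit $u^2+au+b$ directly as a zero of $p(x)$ inside $E$. First I would express $x$ as a rational function of $u$. Combining $u=x+\Gl^{q-1}$ with Equation \eqref{eq:4.1}, namely $\Gl^{q-1}=(u^2+au+b)/(u^q+u)$, and clearing denominators in characteristic two gives
\[
x=u+\frac{u^2+au+b}{u^q+u}=\frac{u^{q+1}+au+b}{u^q+u}.
\]
Writing $N=u^{q+1}+au+b$ and $D=u^q+u$, I would then substitute into $p(x)=x^2+ax+b$ to obtain
\[
p(x)=\frac{N^2+aND+bD^2}{D^2},
\]
so that the place of $E$ attached to an irreducible $g(u)$ lies over $p(x)$ precisely when $g(u)\mid\bigl(N^2+aND+bD^2\bigr)$ while $g(u)\nmid D$, since then $\nu_{g}(p(x))>0$.

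Next I would test $g(u)=p(u):=u^2+au+b$, which is irreducible over $\F_q$ (it is the same polynomial as $p(x)$) and hence defines a place of degree two, consistent with $\deg(Q)=2$. Let $\theta\in\F_{q^2}\setminus\F_q$ be a root of $p(u)$; the second root is its Frobenius conjugate $\theta^q$, so that $\theta+\theta^q=a$ and $\theta^{q+1}=\theta\cdot\theta^q=b$ are exactly the trace and norm read off from the coefficients of $p$. Evaluating the building blocks at $\theta$ yields $D(\theta)=\theta^q+\theta=a$ and $N(\theta)=\theta^{q+1}+a\theta+b=a\theta$ (using $b+b=0$ in characteristic two). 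Substituting into the numerator makes the whole expression collapse:
\[
N(\theta)^2+aN(\theta)D(\theta)+bD(\theta)^2=a^2\theta^2+a^3\theta+a^2b=a^2\bigl(\theta^2+a\theta+b\bigr)=0.
\]
Thus $p(u)$ divides $N^2+aND+bD^2$. Moreover $D(\theta)=a\neq0$: were $a=0$ we would have $p(x)=x^2+b=(x+b^{1/2})^2$, contradicting irreducibility of $p(x)$ over a field of characteristic two. Hence $p(u)\nmid D$, and therefore $\nu_{p(u)}(p(x))>0$, so the degree-two place of $E$ corresponding to $u^2+au+b$ is a zero of $p(x)$ and lies over the place $p(x)$ of $K$.

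Finally I would invoke uniqueness. By Proposition \ref{prop:3.1} there is a single place of $E$ lying over $p(x)$, which is totally ramified with $e(Q|p(x))=q+1$ and residue degree one; equivalently the fundamental identity $\sum e_i f_i=[E:K]=q+1$ is already saturated by that one place. Consequently the place produced above must be $Q$ itself, and $Q$ corresponds to $p(u)=u^2+au+b$, as claimed. The only real obstacle is the characteristic-two bookkeeping when passing from $x$ to $u$; the conceptual heart is recognizing that $\theta^{q+1}=b$ and $\theta+\theta^q=a$ force the numerator to become $a^2(\theta^2+a\theta+b)=0$, with the side condition $a\neq0$ being nothing more than the irreducibility of $p(x)$.
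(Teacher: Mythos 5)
Your proof is correct, but it takes a genuinely different route from the paper's. The paper identifies $Q$ purely by ramification theory: by Proposition \ref{prop:4.1}, $F/\F_q(u)$ is the Kummer extension $\Gl^{q-1}=(u^2+au+b)/(u^q+u)$, so Proposition \ref{prop:2.2} shows the ramified places of $\F_q(u)$ in $F/\F_q(u)$ are exactly the zeros of $u^2+au+b$, of $u+\alpha$ with $\alpha\in\F_q$, and of $1/u$; since $Q$ sits under the totally ramified place $P$ of $F$ and has degree $2$, it can only be the zero of $u^2+au+b$. You instead make the inclusion $K\subseteq E$ explicit, writing $x=(u^{q+1}+au+b)/(u^q+u)=N/D$, and verify by direct evaluation at a root $\theta$ of $u^2+au+b$ that the numerator $N^2+aND+bD^2$ of $p(x)$ vanishes while $D(\theta)=a\neq 0$, so $\nu_{p(u)}(p(x))>0$ and the degree-two place of $p(u)$ lies over $p(x)$; uniqueness of the place of $E$ above $p(x)$ then forces it to be $Q$. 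Your algebra checks out: in characteristic two $x=u+\Gl^{q-1}$ does give the displayed formula, the collapse $N(\theta)^2+aN(\theta)D(\theta)+bD(\theta)^2=a^2(\theta^2+a\theta+b)=0$ via $\theta+\theta^q=a$, $\theta^{q+1}=b$ is correct, and $a\neq 0$ indeed follows from irreducibility of $p$ because every element of $\F_{2^n}$ is a square. What each approach buys: yours is more elementary, bypassing the classification of ramified places in the Kummer extension altogether and yielding as a bonus the explicit rational expression of $x$ in $u$ (useful for computation); the paper's argument reuses the Kummer machinery it already needed for Proposition \ref{prop:4.1}, so it is shorter in context. One small citation point: the uniqueness and total ramification of $Q$ in $E/K$ is not literally Proposition \ref{prop:3.1} (which concerns $F/K$); it is derived from it in the discussion at the start of Section \ref{sec:3}, or follows directly since a second place of $E$ over $p(x)$ would give a second place of $F$ over $p(x)$.
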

\begin{proof}
From Proposition \ref{prop:3.1}, the zero of $p(x)$ in $\F_q(x)$ is totally ramified in $F/\F_q(x)$. Let $P$ be the unique place of $F$ lying over $p(x)$ and $Q$ be the place of $\F_q(u)$ lying over $p(x)$. Then $Q$ is the restriction of $P$ to $\F_q(u)$ and $Q$ is totally ramified in  $F/\F_q(u)$  with degree $2$.

From Proposition \ref{prop:4.1} and Proposition \ref{prop:2.2} , the zeros of $u^2+au+b$, $1/u$ and $u+\alpha$ with $\alpha\in \F_q$ are all the (totally) ramified places of $\F_q(u)$ in the extension $F/\F_q(u)$. However, the zero of $u^2+au+b$ is the unique ramified place of $\F_q(u)$ in the extension $F/\F_q(u)$ with degree $2$.
Hence, the place $Q$ is the place corresponding to the quadratic polynomial $p(u)=u^2+au+b$.
\end{proof}

The Riemann-Roch space $\mL(Q)$ is a $3$-dimensional vector space over $\F_q$, i.e.,
$$\mL(Q)=\left\{\frac{c_0+c_1u+c_2u^2}{u^2+au+b}: c_i\in \F_q \text{ for } 0\le i\le 2\right\}.$$
Hence, the complementary space $V$ of $\F_q$ in $\mL(Q)$ is a $2$-dimensional vector space over $\F_q$ spanned by $$\frac{1}{u^2+au+b}\text{ and } \frac{u}{u^2+au+b}.$$

\begin{prop}\label{prop:4.3}
Let $\sigma$ be the automorphism $\eta^{q-1}$ of $\F_q(u)$. Then the $\F_q$-automorphism $\s$ is uniquely determined by $$\sigma(u)=\frac{a^{-1}bu+b}{u+a+a^{-1}b}.$$
\end{prop}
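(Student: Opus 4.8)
The plan is to compute $\sigma(u)$ directly from the Carlitz action, after rewriting $u$ in a multiplicative form adapted to $\eta$. Regard $\sigma$ as the restriction to $E=\F_q(u)$ of the automorphism $\eta^{q-1}$ of $F$, so that $\sigma(u)=\eta^{q-1}(u)$ may be computed inside $F$. First I would observe that $u$ has a quotient description: from $u=x+\lambda^{q-1}$ and $\lambda^x=\varphi(\lambda)=\lambda^q+x\lambda$ one gets $\lambda^x/\lambda=\lambda^{q-1}+x=u$, hence $u=\lambda^x/\lambda$. Since $\eta^i(\lambda)=\lambda^{x^i}$ and $\sigma$ fixes $x=$ (and all of $K$), and since $\sigma$ commutes with the $q$-th power map, I have $\sigma(\lambda)=\lambda^{x^{q-1}}$ and $\sigma(\lambda^x)=\sigma(\varphi(\lambda))=\varphi(\sigma(\lambda))=\sigma(\lambda)^x=\lambda^{x^{q}}$. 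Therefore
$$\sigma(u)=\frac{\sigma(\lambda^x)}{\sigma(\lambda)}=\frac{\lambda^{x^{q}}}{\lambda^{x^{q-1}}}.$$
This is the conceptual crux: the multiplicative form converts the otherwise awkward additive substitution $\sigma(u)=x+\sigma(\lambda^{q-1})$ into a quotient whose exponents can simply be reduced modulo $p(x)$.

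The next step is that reduction. Because $\lambda\in\Lambda_{p(x)}$ satisfies $\lambda^{p(x)}=0$, the Carlitz action of any $f(x)$ on $\lambda$ depends only on $f(x)\bmod p(x)$. As in the proof of Proposition \ref{prop:4.1}, $x^q\equiv x+a$; and from $x(x+a)\equiv b$ one gets $x^{-1}\equiv b^{-1}(x+a)$, whence $x^{q-1}\equiv(x+a)b^{-1}(x+a)=b^{-1}(x^2+a^2)\equiv ab^{-1}x+(1+a^2b^{-1})$. Writing $c_1=ab^{-1}$ and $c_0=1+a^2b^{-1}$ and using the $\F_q$-linearity of the Carlitz action together with $\lambda^x=\lambda^q+x\lambda$, I obtain $\lambda^{x^q}=\lambda^{x+a}=\lambda^q+(x+a)\lambda$ and $\lambda^{x^{q-1}}=\lambda^{c_1x+c_0}=c_1\lambda^q+(c_1x+c_0)\lambda$.

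Dividing numerator and denominator by $\lambda$ and substituting $\lambda^{q-1}=u+x$ then collapses everything, with the characteristic-two cancellations doing the work: the numerator becomes $(u+x)+(x+a)=u+a$, and the denominator becomes $c_1(u+x)+(c_1x+c_0)=c_1u+c_0$, so that
$$\sigma(u)=\frac{u+a}{ab^{-1}u+(1+a^2b^{-1})}.$$
Finally I would multiply numerator and denominator by $a^{-1}b$ to reach the asserted normalization: the numerator becomes $a^{-1}bu+b$ and the denominator becomes $u+a+a^{-1}b$, giving $\sigma(u)=\frac{a^{-1}bu+b}{u+a+a^{-1}b}$ exactly as claimed. (One may also check the determinant of the associated matrix is $1$, confirming $\sigma$ is a genuine automorphism.)

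I expect no conceptual obstacle once $u=\lambda^x/\lambda$ is in hand; the only real care needed is bookkeeping in the reduction of $x^{q-1}$ modulo $p(x)$ — namely the inversion of $x$ — and keeping track of the characteristic-two cancellations ($x+x$ and $c_1x+c_1x$) in the last simplification, together with the harmless rescaling by $a^{-1}b$ that reconciles my normalization with the stated one.
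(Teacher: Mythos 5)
Your proposal is correct and follows essentially the same route as the paper: both write $u=\lambda^x/\lambda$ so that $\sigma(u)=\lambda^{x^q}/\lambda^{x^{q-1}}$, reduce the exponents modulo $p(x)$ using $x^q\equiv x+a$, and simplify via the $\F_q$-linearity of the Carlitz action to reach $\frac{u+a}{ab^{-1}u+a^2b^{-1}+1}=\frac{a^{-1}bu+b}{u+a+a^{-1}b}$. The only cosmetic difference is that you compute $x^{q-1}\bmod p(x)$ by inverting $x$ (via $x(x+a)\equiv b$), whereas the paper finds the same coefficients $c=ab^{-1}$, $d=a^2b^{-1}+1$ by the method of undetermined coefficients after multiplying through by $x$.
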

\begin{proof}
It is easy to verify that $\sigma(u)=\s(\Gl^x/\Gl)=\Gl^{x^{q}}/\Gl^{x^{q-1}}$.
From the proof of Proposition \ref{prop:4.1}, we have $x^q\equiv x+a (\text{mod } x^2+ax+b)$ and $\Gl^{x^q}=\Gl^{x+a}.$
Assume that there exist $c,d\in \F_q$ such that $x^{q-1}\equiv cx+d  (\text{mod } x^2+ax+b)$, then we have $x+a\equiv x^q=x\cdot x^{q-1}\equiv cx^2+dx(\text{mod } x^2+ax+b)$ by multiplying $x$ at both sides.
It follows that $$\frac{c}{1}=\frac{d+1}{a}=\frac{a}{b}.$$  Hence, we have $c=ab^{-1}$, $d=a^2b^{-1}+1$ and $$x^{q-1}\equiv ab^{-1}x+a^2b^{-1}+1 (\text{mod } x^2+ax+b).$$
By the direct compution, we have
\begin{align*}
\sigma(u)&=\frac{\Gl^{x^{q}}}{\Gl^{x^{q-1}}}=\frac{\Gl^{x+a}}{\Gl^{ab^{-1}x+a^2b^{-1}+1}}=\frac{\Gl^{x}+a\Gl}{ab^{-1}\Gl^x+(a^2b^{-1}+1)\Gl}\\&=\frac{u+a}{ab^{-1}u+a^2b^{-1}+1}=\frac{a^{-1}bu+b}{u+a+a^{-1}b}.
\end{align*}
\end{proof}

From Proposition \ref{prop:4.3}, the automorphisms $\s^k\in \Gal(E/K)$ can be calculated by recursive equations for $0\le k\le q$.

\begin{prop}\label{prop:4.4}
Let $a_0=1, b_0=c_0=0,d_0=1$ and $a_1=a^{-1}b, b_1=b, c_1=1,d_1=a+a^{-1}b$.
Let $a_k,b_k,c_k,d_k$ satisfy the following recursive equations
$$\begin{cases}
a_{k+1}=a_1a_k+c_1b_k=a^{-1}ba_k+b_k,\\
b_{k+1}=b_1a_k+d_1b_k=ba_k+(a+a^{-1}b)b_k,\\
c_{k+1}=a_1c_k+c_1d_k=a^{-1}bc_k+d_k,\\
d_{k+1}=b_1c_k+d_1d_k=bc_k+(a+a^{-1}b)d_k.
\end{cases}$$
Then the automorphism $\s^k$ of $\F_q(u)$ can be  determined by
$$\s^k(u)=\frac{a_ku+b_k}{c_ku+d_k}.$$
\end{prop}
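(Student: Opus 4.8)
The plan is to recognize $\sigma$ as a fractional linear (M\"obius) transformation of $\F_q(u)$ and to establish the formula by induction on $k$, using that iterating such a transformation amounts to taking powers of the associated $2\times 2$ matrix. By Proposition \ref{prop:4.3} we have $\sigma(u)=\frac{a_1 u+b_1}{c_1 u+d_1}$ with $a_1=a^{-1}b$, $b_1=b$, $c_1=1$, $d_1=a+a^{-1}b$, which are exactly the prescribed values; together with $\sigma^0=\mathrm{id}$ and $a_0=d_0=1$, $b_0=c_0=0$, this disposes of the base cases $k=0$ and $k=1$.

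For the inductive step I would assume $\sigma^k(u)=\frac{a_k u+b_k}{c_k u+d_k}$ and compute $\sigma^{k+1}(u)=\sigma(\sigma^k(u))$. Because $\sigma$ is an $\F_q$-automorphism, it fixes the scalars $a_k,b_k,c_k,d_k$ and commutes with the rational expression, so
\[
\sigma^{k+1}(u)=\frac{a_k\,\sigma(u)+b_k}{c_k\,\sigma(u)+d_k}.
\]
Substituting $\sigma(u)=\frac{a_1 u+b_1}{c_1 u+d_1}$ and clearing the inner denominator yields
\[
\sigma^{k+1}(u)=\frac{(a_1 a_k+c_1 b_k)\,u+(b_1 a_k+d_1 b_k)}{(a_1 c_k+c_1 d_k)\,u+(b_1 c_k+d_1 d_k)},
\]
whose coefficients are precisely $a_{k+1},b_{k+1},c_{k+1},d_{k+1}$ as defined by the recursion. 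This closes the induction.

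Conceptually, the statement is just the assertion that the isomorphism $\Aut(\F_q(u)/\F_q)\cong\PGL_2(q)$ sends $\sigma^k$ to $M^k$, where $M$ is the matrix with rows $(a_1,b_1)$ and $(c_1,d_1)$, and the displayed recursion is nothing but the entrywise form of the matrix identity $M^{k+1}=M^k M$. The only step requiring care is the bookkeeping of the multiplication order: one must check that feeding the inner transformation $\sigma$ into the outer $\sigma^k$ reproduces $M^k M$ rather than $M M^k$. Since a matrix commutes with its own powers this ordering is harmless here, but it is the sole place where an index slip could occur, so I would verify the expansion of the composed fraction against the recursion term by term.
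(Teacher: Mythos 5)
Your proof is correct and follows essentially the same route as the paper's: induction on $k$, using that the $\F_q$-automorphism $\sigma$ fixes the coefficients $a_k,b_k,c_k,d_k$, so $\sigma^{k+1}(u)=\frac{a_k\sigma(u)+b_k}{c_k\sigma(u)+d_k}$, after which substituting $\sigma(u)=\frac{a_1u+b_1}{c_1u+d_1}$ and clearing the inner denominator yields exactly the stated recursion. The added remark identifying the recursion with the matrix identity $M^{k+1}=M^kM$ in $\PGL_2(q)$ is a nice clarification but not a different argument.
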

\begin{proof}
Assume that $\s^k$ is the automorphism of $\F_q(u)$ determined by
$$\s^k(u)=\frac{a_ku+b_k}{c_ku+d_k}.$$
Then we have
$$\s^{k+1}(u)=\s(\s^k(u))=\s\left(\frac{a_ku+b_k}{c_ku+d_k}\right)=\frac{a_k\s(u)+b_k}{c_k\s(u)+d_k}=\frac{a_k\frac{a_1u+b_1}{c_1u+d_1}+b_k}{c_k\frac{a_1u+b_1}{c_1u+d_1}+d_k}.$$
From the above recursive equations, we obtain
$$\s^{k+1}(u)=\frac{(a_1a_k+c_1b_k)u+(b_1a_k+d_1b_k)}{(a_1c_k+c_1d_k)u+(b_1c_k+d_1d_k)}   =\frac{a_{k+1}u+b_{k+1}}{c_{k+1}u+d_{k+1}}.$$
This completes the proof.
\end{proof}

Let $P_0$ be the zero of $u$ in $E=\F_q(u)$. From Proposition \ref{prop:4.4}, if $a_j\neq 0$, then $P_j=\s^j(P_0)$ corresponds to the linear polynomial $u+a_j^{-1}b_j$; otherwise, $P_j=P_\infty$ which is the infinity place of $\F_q(u)$.
For any $z\in V\setminus \{0\}$, we have $$[z]=\cup_{i=0}^q ((\s^i(z)+\F_q) \cap V)$$ from Proposition \ref{prop:3.5} and Proposition \ref{prop:4.4}.
Thus, we can obtain all the equivalence classes $[z_1],[z_2],\cdots,[z_{q-1}]$ of $V\setminus \{0\}$ under the relation $\sim$.
Furthermore, $z_i(P_j)=z_i(u)|_{u=-b_j/a_j}$ if $a_j\neq 0$; otherwise, $z_i(P_j)=z_i(P_\infty)=0$.
Hence, we can provide an explicit construction of a family of binary sequences with low correlation via cyclotomic function fields from Theorem \ref{thm:3.6}.

\subsection{Algorithm and numerical results}\label{sec:4.2}
Now we provide an algorithm to generate a family of sequences with low correlation.
Such an explicit construction of sequences with low correlation can be given as follows.

\bigskip
\begin{center}
{\bf Construction of sequences with low correlation}
\end{center}
\begin{itemize}
\item {\bf Step 1:} Input  a prime power $q=2^n$.
\item {\bf Step 2:} Find a primitive quadratic irreducible polynomial $p(x)=x^2+ax+b\in \F_q[x]$.
\item {\bf Step 3:} Define $\s$ as an $\F_q$-automorphism of $\F_q(u)$ determined by
$$\sigma(u)=\frac{a^{-1}bu+b}{u+a+a^{-1}b}.$$
\item {\bf Step 4:} Let $a_0=d_0=1,b_0=c_0=0$ and $a_1=a^{-1}b, b_1=b, c_1=1,d_1=a+a^{-1}b$. For $0\le k\le q$, calculate the explicit expressions of $\s^k(u)=\frac{a_ku+b_k}{c_ku+d_k}$ via the following recursive equations:
$$\begin{cases}
a_{k+1}=a_1a_k+c_1b_k=a^{-1}ba_k+b_k,\\
b_{k+1}=b_1a_k+d_1b_k=ba_k+(a+a^{-1}b)b_k,\\
c_{k+1}=a_1c_k+c_1d_k=a^{-1}bc_k+d_k,\\
d_{k+1}=b_1c_k+d_1d_k=bc_k+(a+a^{-1}b)d_k.
\end{cases}$$
\item {\bf Step 5:} Let $P_0$ be the zero of $u$ in $\F_q(u)$ and $P_j=\s^j(P_0)$ for $1\le j\le q$.
\item {\bf Step 6:} Let $V$ be the $2$-dimensional vector space over $\F_q$ spanned by $$\frac{1}{u^2+au+b} \text{ and } \frac{u}{u^2+au+b}.$$
\item {\bf Step 7:} Determine the equivalence classes of $V\setminus \{0\}$ under the relation $\sim$.
Choose an element $z_1=\frac{a_{1,0}+a_{1,1}u}{u^2+au+b} \in V\setminus \{0\}$, for any automorphism $\s^i$ with $0\le i\le q$, $\s^i(z_1)$  can be explicitly given by
$$\s^i(z_1)=\frac{a_{1,0}+a_{1,1}\s^i(u)}{(\s^i(u))^2+a\s^i(u)+b}.$$
Determine the intersection
$$(\s^i(z_1)+\F_q) \cap V=\{w_{1,i}\}. $$
The equivalence class of $z_1$ is given by $[z_1]=\{w_{1,i}: 0\le i\le q\}.$
Choose $z_2\in V\setminus \{0\}-[z_1]$, similarly we can obtain the equivalence class of $z_2$.
Recursively, we can obtain all the equivalence classes $[z_1],[z_2],\cdots,[z_{q-1}]$ for $V\setminus \{0\}$.
\item {\bf Step 8:} Output a family of sequences $\mS=\{s_i: 1\le i\le q-1\}$ defined by $$s_i=(s_{i,0},s_{i,1},\cdots,s_{i,q}) \text{ with } s_{i,j}=(-1)^{\text{Tr}(z_i(P_j))} \text{ for } 0\le j\le q.$$
\item {\bf Step 9:} 
Output the correlation $Cor(\mS)=\max\{\{|\sum_{k=0}^{q} s_{i,k}s_{i,k+t}|:1\le i\le q-1, 1\le t\le q\} \cup \{|\sum_{k=0}^{q} s_{i,k}s_{j,k+t}|: 1\le i\neq j\le q-1, 0\le t\le q\}\}.$
\end{itemize}

We list some numerical results on the family of binary sequences obtained from the above construction in Table II.


\begin{table}[]\label{tab:2}
	\setlength{\abovecaptionskip}{0pt}%
	\setlength{\belowcaptionskip}{10pt}%
	\caption{PARAMETERS OF OUR SEQUENCES}
	\center
	\begin{tabular}{@{}|c|c|c|c|c|@{}}
		\toprule
		Field Size & Seq. Length & Family Size & Max Correlation & No. of Balanced Seq. \\ \midrule
		32    & 33             & 31       &    11                    &   5         \\ \midrule
		64    & 65              & 63         & 15                       & 12            \\ \midrule
		128   & 129            & 127        & 21                        & 14           \\ \midrule
		256  & 257             & 255        & 31                        & 16           \\ \midrule
		512   & 513             & 511         &  45                      & 18            \\
		\bottomrule
	\end{tabular}
\end{table}

\end{document}